\newcommand{\uset}{{\mathcal U}}
\newcommand{\iset}{{\mathcal I}}
\newcommand{\jset}{{\mathcal J}}
\newcommand{\lset}{{\mathcal L}}
\newcommand{\ssset}{{\mathcal S}}
\newcommand{\aset}{{\mathcal A}}
\newcommand{\bset}{{\mathcal B}}
\newcommand{\cset}{{\mathcal C}}
\newcommand{\wone}{\textsf{W[1]}\xspace}
\newcommand{\wonehard}{{{\wone}-hard}\xspace}
\newcommand{\fpt}{\textsf{FPT}\xspace}
\newcommand{\poly}{\textsf{poly}\xspace}
\newcommand{\ksp}{$k$\mbox{\sf -Set Packing}\xspace}
\newcommand{\spp}{\mbox{\sc Set Packing}\xspace}
\newcommand{\ecp}{\mbox{\sc Exact Covering}\xspace}
\newcommand{\pspfull}{{\sc Parameterized Set Packing}\xspace}
\newcommand{\psp}{\mbox{\sc PSP}\xspace}
\newcommand{\cpsp}{\mbox{\sc Compact \psp}\xspace}
\newcommand{\oneinthree}{\mbox{\sc $1$-in-$3$-SAT}\xspace}
\newcommand{\threesat}{\mbox{\sc $3$-SAT}\xspace}
\newcommand{\kclq}{$k$\mbox{\sc -Clique}\xspace}
\newcommand{\rvsm}{$r$\mbox{\sc -VectorSum}\xspace}
\newcommand{\crvs}{\mbox{\sc Compact $r$-VectorSum}\xspace}
\newcommand{\sgifull}{\mbox{\sc Subgraph Isomorphism}\xspace}
\newcommand{\sgi}{\mbox{\sc SGI}\xspace}
\newcommand{\clq}{\mbox{\sc Clique}\xspace}
\newcommand{\ignore}[1]{}
\newcommand{\cmp}[1]{{\sf comp}#1}
\newcommand{\scr}{{\sc Set $r$-Covering}\xspace}
\newcommand{\ecr}{{\sc Exact $r$-Covering}\xspace}
\newcommand{\cscp}{{\sc Compact \scr}\xspace}
\newcommand{\cecp}{{\sc Compact \ecr}\xspace}
\title{On the Parameterized Complexity of Compact Set Packing}
\author{Ameet Gadekar}
\institute{Aalto University, Espoo, Finland \email{ameet.gadekar@aalto.fi}}
\begin{document}
	\maketitle              
	\begin{abstract}
		The \spp problem is, given a collection of sets $\ssset$ over a ground set $\uset$, to find a maximum collection of sets that are pairwise disjoint. 
		The problem is among the most fundamental NP-hard optimization problems that have been studied extensively in various computational regimes.
		The focus of this work is on parameterized complexity, \pspfull (\psp): Given $r \in {\mathbb N}$, is there a collection $ \ssset' \subseteq \ssset: |\ssset'| = r$ such that the sets in $\ssset'$ are pairwise disjoint?
		Unfortunately, the problem is not fixed parameter tractable unless $\wone = \fpt$, and, in fact, an ``enumerative'' running time of $|\ssset|^{\Omega(r)}$ is required unless the exponential time hypothesis (ETH) fails.
		This paper is a quest for  tractable instances of \spp from parameterized complexity perspectives.
		We say that the input $(\uset,\ssset)$ is ``compact'' if $|\uset|  = f(r)\cdot\Theta(\poly( \log |\ssset|))$, for some $f(r) \ge r$. In the \cpsp problem, we are given a compact instance of \psp.
		In this direction, we present a ``dichotomy'' result of \psp: When  $|\uset| = f(r)\cdot o(\log |\ssset|)$, \psp is in \fpt, while for $|\uset| = r\cdot\Theta(\log (|\ssset|))$, the problem is \wonehard; moreover, assuming ETH,  \cpsp does not  admit $|\ssset|^{o(r/\log r)}$ time algorithm even when $|\uset| = r\cdot\Theta(\log (|\ssset|))$. 
		Although certain results in the literature imply hardness of compact versions of related problems such as \scr and \ecr, these constructions fail to extend to \cpsp. A novel contribution of our work is the identification and construction of a gadget, which we call Compatible Intersecting Set System pair, that is crucial in obtaining the hardness result for \cpsp.
		Finally, our framework can be extended to obtain improved running time lower bounds for  \crvs.
	\end{abstract}
	
	%
	%
	%
	
	\section{Introduction}
	Given a graph $G = (V, E)$, the problem of finding a maximum-size subset of disjoint edges (matching) is tractable, but its generalization to hypergraphs, even when the edge length is 3, is \textsf{NP}-hard.
    This general problem is known as the \textsf{Hypergraph Matching} problem. The hyper-graph $H=(W,F)$ can be equivalently viewed as a set system $(\uset,\ssset)$,
	where the universe (or the ground set) $\uset$ corresponds to the vertex set $W$ and $\ssset$ corresponds to the collection of hyperedges $F$. Then finding a maximum matching in $H$ is equivalent to finding maximum number of pairwise disjoint sets (packing) in $\ssset$. Hence the \textsf{Hypergraph Matching} problem is also known as the \spp problem, which is a fundamental problem in combinatorial optimization with numerous applications.
 While this problem captures many classical combinatorial problems such as maximum independent set (or maximum clique), $k$-dimensional matching and also, some graph packing problems~\cite{chataigner2009approximation,hassin2006approximation}, this generalization also makes it intractable in several regimes. 
	One computational regime in which \spp has been explored extenstively is approximation algorithms. 
	Since \spp generalizes the maximum independent set problem~\cite{ausiello1980structure}, it inherits the inapproximability of the latter problem~\cite{10.5555/874062.875499}. This immediately implies that the trivial approximation of picking simply one set in the packing is roughly the best to hope for. Furthermore, approximations in terms of $|\uset|$ are also not hopeful since the result also implies inapproximability bound of $|\uset|^{1/2 -\epsilon}$, which is matched by~\cite{halldorsson2000independent}. To combat these intractabilities, various restrictions of \spp have been studied. Particularly, a restriction where the size of the sets in $\ssset$ is bounded by some integer $k$,  which is known as \ksp, is also a well-studied problem. However, \ksp captures the independent set problem in bounded degree graphs, which again is a notoriously hard problem to approximate beyond the ``trivial'' bound~\cite{austrin2009inapproximability,bansal2018lovasz}. While \cite{hazan2006complexity} improves the lower bound for \ksp to $\Omega(k/\ln k)$, the best known approximation is $(k+1+\epsilon)/3$ \cite{chan2012linear,cygan2013improved}, yielding a logarithmic gap between the bounds. 
	Besides approximation algorithms, \spp has also been studied from  the parameterized complexity perspectives (with the standard parameter on the size of an optimal packing solution). 
	In this problem, known as the \pspfull  (\psp) problem, we are given an instance $(\uset,\ssset,r)$ and the task is to decide if there exists a packing of size $r$. Unfortunately, even \psp remains intractable and is, actually, \wone-complete \cite{Downey:2012:PC:2464827}. In fact, \textit{Exponential Time Hypothesis} (ETH) implies that the trivial \textit{enumerative algorithm} running in $O^{*}(|\ssset|^r)$ time  to find an $r$-packing is asymptotically our best hope~\cite{10.5555/2815661}. The algorithmic outlook for \psp worsens further due to \cite{chalermsook2017gap}, which rules out $o(r)$-\fpt-approximation algorithm assuming the \textit{Gap Exponential Time Hypothesis} (Gap-ETH). 
	Assuming a weaker hypothesis of $\fpt \neq \wone$, very recently~\cite{s2021polynomial} showed that there is   no \fpt algorithm  for \psp problem that finds a packing of size $r/r^{1/H(r)}$, for any increasing function $H(\cdot)$, when given a promise that there is an $r$-packing in the instance.
	Thus, the flurry of these negative results make it likely that \spp problem is intractable in  all computational regimes.
	
	In this paper, 
	we consider \psp on \textit{compact} instances. We say that an instance $(\uset, \ssset, r)$ of $\psp$ is compact if  $|\uset| = f(r) \cdot \Theta( \poly(\log |\ssset|))$, for some function $f(r) \ge r$, that is, the universe is relatively small compared to the number of sets\footnote{In fact there is another way to define compactness:  when $|\ssset| = f(r) \cdot\Theta( \poly(\log |\uset|))$. However in this case, the \textit{enumerative algorithm} running in time $O^{*}(|\ssset|^r)$ is already fixed parameter tractable~\cite{cai2001subexponential}. Thus, the interesting case is when the universe is compact, which is the case we will be focusing on.}.    
	Besides the algorithmic motivation, compact instances have recently been used as an ``intermediate step'' to prove FPT inapproximability results of the (non-compact) classical  problems (see, e.g., \cite{bhattacharyya_et_al:LIPIcs:2016:6362,lin:LIPIcs:2019:10657} where the compact instances were used in proving FPT-inapproximability of the $k$-EvenSet and Dominating Set). 
	We hope that studying \cpsp would lead to some ideas that would be useful in proving tight FPT inapproximability of \psp (that is, to weaken the Gap-ETH assumption used in~\cite{chalermsook2017gap}). 
	
	%
	\subsection{Our Results}
	Our main result is the following dichotomy of \pspfull. 
	\begin{theorem}[Dichotomy] \label{thm:dich}
		The following dichotomy holds for \psp.
		\begin{itemize}
			\item[$\circ$] If $|\uset| = f(r)\cdot o(\log |\ssset|)$, for any $f$, then \psp is in \fpt. 
			\item[$\circ$] \psp remains \wonehard even when $|\uset| = r \cdot \Theta(\log |\ssset|)$.
		\end{itemize}
	\end{theorem}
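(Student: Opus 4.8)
\emph{Plan.} I would handle the two bullets separately: the first by a subset dynamic program that becomes \fpt once the universe is sublogarithmic in $|\ssset|$, the second by a parameterized reduction from \kclq producing compact instances. For the first bullet, for each $X\subseteq\uset$ compute $\pi(X)$, the largest number of pairwise-disjoint members of $\ssset$ contained in $X$, via $\pi(\emptyset)=0$ and, fixing any $x\in X$, $\pi(X)=\max\bigl(\pi(X\setminus\{x\}),\ \max_{S\in\ssset:\, x\in S\subseteq X}1+\pi(X\setminus S)\bigr)$; the input is a yes-instance iff $\pi(\uset)\ge r$ (empty members of $\ssset$, if present, only help and are trivially accounted for). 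This takes $O^{*}(2^{|\uset|})$ time. Writing $|\uset|=f(r)\cdot g(|\ssset|)$ with $g(m)=o(\log m)$: since $g(m)/\log m\to 0$, for every fixed $r$ there is a threshold $M(r)$ such that $g(m)\le\frac{1}{f(r)}\log m$ whenever $m\ge M(r)$; hence for $|\ssset|\ge M(r)$ one has $2^{|\uset|}\le|\ssset|$ and the program runs in $\poly(|\ssset|)$ time, while for $|\ssset|<M(r)$ the whole instance has size bounded by a function of $r$ and is solved by brute force. Either way the running time is $h(r)\cdot\poly(|\ssset|)$, so \psp is \fpt (uniformly, if the $o(\cdot)$ is effective, i.e. $M(r)$ computable; non-uniformly otherwise).

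\emph{The reduction.} For the second bullet I would reduce from \kclq on a graph $G$ with vertex set $[n]$, assuming $k\le n$ and $n$ a power of two (pad otherwise), and set $\ell:=\log n$, identifying each vertex with its $\ell$-bit string; a $k$-clique is a choice of one vertex $u_i$ for each slot $i\in[k]$ with $\{u_i,u_j\}\in E(G)$ for all $i<j$. The ground set is $\uset=\{z_i : i\in[k]\}\cup\{z_{ij}:i<j\}\cup\bigcup_{i\ne j}W_i^{(j)}$, where $z_i,z_{ij}$ are single ``anchor'' elements and $W_i^{(j)}=\{w_{i,j,t}^{0},w_{i,j,t}^{1}:t\in[\ell]\}$ has $2\ell$ ``bit-position'' elements. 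For a vertex $u$ with bits $b_1\cdots b_\ell$ let $\mathrm{fwd}_{i,j}(u)=\{w_{i,j,t}^{b_t}:t\in[\ell]\}$ and $\mathrm{bwd}_{i,j}(u)=\{w_{i,j,t}^{1-b_t}:t\in[\ell]\}$; the crucial fact is that $\mathrm{fwd}_{i,j}(u)\cap\mathrm{bwd}_{i,j}(u')=\emptyset$ exactly when $u=u'$. Now $\ssset$ consists of a \emph{vertex set} $P_{i,u}=\{z_i\}\cup\bigcup_{j\ne i}\mathrm{fwd}_{i,j}(u)$ for every $i\in[k],u\in[n]$, and an \emph{edge set} $T_{i,j,u,w}=\{z_{ij}\}\cup\mathrm{bwd}_{i,j}(u)\cup\mathrm{bwd}_{j,i}(w)$ for every $i<j$ and every ordered pair $(u,w)$ with $\{u,w\}\in E(G)$; the target is a packing of size $r:=k+\binom{k}{2}$.

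\emph{Correctness and compactness.} A $k$-clique $\{u_1,\dots,u_k\}$ (with $u_i$ in slot $i$) gives the size-$r$ packing $\{P_{i,u_i}:i\in[k]\}\cup\{T_{i,j,u_i,u_j}:i<j\}$; any two of its sets either lie in disjoint blocks and carry distinct anchors, or form a pair $P_{i,u_i},T_{i,j,u_i,u_j}$ whose only shared block $W_i^{(j)}$ meets in $\mathrm{fwd}_{i,j}(u_i)\cap\mathrm{bwd}_{i,j}(u_i)=\emptyset$, so the packing is valid. Conversely, the anchors allow at most one vertex set per slot and at most one edge set per pair, so a packing of size $r$ must be $\{P_{i,u_i}\}_{i\in[k]}\cup\{T_{i,j,u'_{ij},w'_{ij}}\}_{i<j}$; disjointness inside $W_i^{(j)}$ and $W_j^{(i)}$ forces $u'_{ij}=u_i$ and $w'_{ij}=u_j$, and $T_{i,j,u_i,u_j}$ exists only if $\{u_i,u_j\}\in E(G)$, so $\{u_1,\dots,u_k\}$ is a $k$-clique. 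The reduction runs in polynomial time and $r$ depends only on $k$, so \psp is \wonehard since \kclq is. Finally $|\uset|=\Theta(k^{2}\ell)=\Theta(k^{2}\log n)$, $|\ssset|=\Theta(kn+k^{2}|E|)$ so $\log|\ssset|=\Theta(\log n)$, and $r=\Theta(k^{2})$; hence $|\uset|=r\cdot\Theta(\log|\ssset|)$ (with absolute constants, and if an exact constant is wanted, pad $\uset$ with dummy elements forming one extra set every solution must take).

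\emph{Main obstacle.} The difficulty sits entirely in the second bullet, in building a \emph{compact} consistency gadget: the vertex chosen for slot $i$ must be forced to agree across the $k-1$ edge sets incident to $i$, yet the universe may spend only $O(\log n)$ elements per interaction. The forward/complement bit encoding is what makes an equality test cost $2\log n$ rather than $\Omega(n)$ elements, and giving each pairwise interaction its private block $W_i^{(j)}$ (touched by exactly one vertex set and one edge set) is what keeps distinct gadgets from interfering. Strengthening this to the sharper ETH bound $|\ssset|^{o(r/\log r)}$ from the abstract would additionally require shrinking $r$ to $\Theta(k\log k)$ --- equivalently, letting $|\ssset|=n^{\Theta(\log k)}$ --- by bundling $\Theta(\log k)$ slots together, which is more delicate but orthogonal to the dichotomy.
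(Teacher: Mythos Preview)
Your proof is correct for the stated theorem, but it takes a different route from the paper's.

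For the \fpt bullet both you and the paper run an $O^{*}(2^{|\uset|})$ dynamic program and then argue that $2^{f(r)\cdot o(\log |\ssset|)}$ is \fpt; the paper phrases the DP as longest path in a DAG and cites Cai--Juedes for the sub-logarithmic argument, but the content is the same.

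For the \wonehard bullet the approaches genuinely diverge. You reduce from \kclq with $r=k+\binom{k}{2}=\Theta(k^{2})$, using single anchor elements $z_i,z_{ij}$ to force exactly one vertex-set per slot and one edge-set per pair, and the forward/complement bit gadget to enforce equality. The paper instead reduces from \sgi with $r=|V_H|+|E_H|=\Theta(k)$, replaces your anchors by an extra ISS block $U_{v_i,0}$ together with a branching over all $O(k!)$ orderings of $V_H$ and a ``$\beta>\alpha$'' trick to guarantee that the $\ell$ chosen $V$-sets come from distinct vertices of $G$; their complement-free $N/2$-subset ISS is morally your bit gadget. Both constructions hit the same compactness $|\uset|=r\cdot\Theta(\log|\ssset|)$. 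Your argument is shorter and produces a single instance; the paper's is heavier but, because it keeps $r=\Theta(k)$ and starts from \sgi, the \emph{same} reduction also yields the ETH lower bound of Theorem~\ref{thm:enuhard} via Marx's $n^{\Omega(k/\log k)}$ bound for \sgi. Your reduction, starting from \kclq with $r=\Theta(k^{2})$, would only give $|\ssset|^{\Omega(\sqrt{r})}$ under ETH, as you correctly note in your last paragraph.
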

	The algorithmic result follows from well-known dynamic programming based algorithms~\cite{10.1137/070683933,10.5555/2815661} that run in time $O^*(2^{|\uset|})$, and observing that this running time is fixed parameter tractable~\cite{cai2001subexponential} when $|\uset| = f(r)o(\log |\ssset|)$.
	However, for completeness, in Appendix~\ref{ss:dialgo} we present a simple dynamic program algorithm based on finding longest path in a DAG  running in time  $O^*(2^{|\uset|})$.
	The main contribution of our work is the \wone-hardness of \psp even when $|\uset| = r \cdot \Theta(\log |\ssset|)$. Towards this, we show an \fpt-reduction from  \sgifull (\sgi)  to \cpsp. The hardness result follows since \sgi is \wonehard. In fact, our hardness result can be strengthened assuming Exponential Time Hypothesis (ETH)~\cite{10.5555/2815661} to obtain the following result. 
	\begin{theorem}\label{thm:enuhard}
		\cpsp requires time $|\ssset|^{\Omega \left(r/\log r\right)}$ even when $|\uset| = r \cdot \Theta(\log |\ssset|)$, unless ETH fails.
	\end{theorem}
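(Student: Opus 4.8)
The plan is to reuse, verbatim, the \fpt-reduction $\phi$ from \sgi to \cpsp that proves the \wone-hardness part of Theorem~\ref{thm:dich}, and to feed it a \emph{quantitative} ETH lower bound for \sgi instead of mere \wone-hardness. First I would pin down the exact guarantees of $\phi$: on an instance $(H,G)$ of \sgi (``does the pattern $H$ embed into the host $G$?'') it produces, in polynomial time, a \cpsp instance $(\uset,\ssset,r)$ with $r=\Theta(|E(H)|)$, with $|\ssset| = |V(G)|^{\Theta(1)}$ where the exponent is an absolute constant independent of $H$, and with the compactness relation $|\uset| = r\cdot\Theta(\log|\ssset|)$ already built in. In particular $\log|\ssset| = \Theta(\log|V(G)|)$ and $\log r = \Theta(\log|E(H)|)$, so both relevant quantities---the ``solution size'' and the ``input size''---are transformed linearly up to constants.

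Next I would invoke Marx's theorem \cite{marx2007can} (see also \cite{10.5555/2815661}): for any recursively enumerable class $\mathcal H$ of graphs of unbounded treewidth, assuming ETH there is no algorithm solving \sgi with pattern restricted to $\mathcal H$ in time $f(H)\cdot |V(G)|^{o(\mathrm{tw}(H)/\log \mathrm{tw}(H))}$. Instantiating $\mathcal H$ with the class of $3$-regular graphs---which contains graphs of treewidth $\Theta(|V(H)|)$ (e.g.\ $3$-regular expanders), and for which $|E(H)| = \tfrac32|V(H)| = \Theta(\mathrm{tw}(H))$---this says: assuming ETH, \sgi on bounded-degree patterns cannot be solved in time $f(k)\cdot n^{o(k/\log k)}$, where $k:=|E(H)|$ and $n:=|V(G)|$.

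Finally I would chain the two. Suppose, for contradiction, that \cpsp on instances with $|\uset| = r\cdot\Theta(\log|\ssset|)$ can be solved in time $|\ssset|^{o(r/\log r)}$. Given a \sgi instance with a $3$-regular pattern, apply $\phi$ and run the assumed algorithm; since $r=\Theta(k)$ and $|\ssset| = n^{\Theta(1)}$, the total running time is
\[
n^{\Theta(1)} + \bigl(n^{\Theta(1)}\bigr)^{o(r/\log r)} = n^{o(k/\log k)},
\]
contradicting the ETH lower bound for \sgi above. Hence no such algorithm exists unless ETH fails, which is Theorem~\ref{thm:enuhard}.

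I expect the real work to sit entirely in the first step: one must verify that the reduction $\phi$ used for the \wone-hardness is \emph{simultaneously} (i) parameter-linear, $r=\Theta(|E(H)|)$ with no super-linear blow-up, (ii) size-polynomial with an exponent not depending on $H$, so that $\log|\ssset| = \Theta(\log n)$, and (iii) compatible with the compactness budget $|\uset| = r\cdot\Theta(\log|\ssset|)$. Any slack here---say a $\mathrm{poly}(k)$ factor in $|\ssset|$ or an extra $\log n$ factor in $r$---would degrade the $k/\log k$ exponent, so checking these three properties for $\phi$ is the crux; the ETH step is a black-box application of \cite{marx2007can}. A secondary, minor point is to choose the source family so that edges, vertices, and treewidth of the pattern are all within constant factors (bounded-degree expanders), which is exactly what turns Marx's treewidth-phrased bound into the clean $k/\log k$ form matching the statement; the $\log$ loss, relative to the ``ideal'' $|\ssset|^{\Omega(r)}$, is inherited from that theorem and is not an artifact of the reduction.
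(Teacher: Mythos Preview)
Your approach is essentially the paper's: compose the same \fpt-reduction from \sgi (Theorem~\ref{thm:sgired}) with Marx's ETH lower bound \cite{marx2007can}, using that the reduction is parameter-preserving up to constants ($r=\Theta(k)$, $\log|\ssset|=\Theta(\log n)$). One small correction worth noting: the reduction in Theorem~\ref{thm:sgired} does not output a single \cpsp instance in polynomial time but rather $\mu = O(k!)$ instances in \fpt time, with the \sgi instance being a yes-instance iff at least one $\jset_p$ admits an $r$-packing; the extra $O(k!)$ factor from running the assumed \cpsp algorithm on all of them is absorbed into the $f(k)$ slack permitted by Marx's bound, so your chaining argument goes through unchanged.
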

	The result of Theorem~\ref{thm:enuhard} follows from the ETH-hardness result of \sgi due to \cite{marx2007can}, and from the fact that	the hardness reduction of Theorem~\ref{thm:dich} is parameter preserving up to a multiplicative constant.
	Note that since \psp can be trivially solved by enumeration in  time $O^*\left(|\ssset|^r\right)$, the above result says that, even for the compact instances this is essentially our best hope, up to a $\log$ factor in the exponent.
	An interesting consequence of the dichotomy theorem coupled with Theorem~\ref{thm:enuhard} is the fact that, as soon as instances get asymptotically smaller, not only we beat the enumerative algorithm, but we  actually obtain an \fpt algorithm.
	We would like to remark that the universe size in Theorem~\ref{thm:enuhard} is tight (up-to $\log r$ factor) since having $|\uset| = o(r/\log r) \cdot \Theta(\log |\ssset|)$ would already allow $|\ssset|^{o(r/\log r)}$ time algorithm.
    Further, note that for \wone-hardness, it is sufficient to have $|\uset| = f(r) \Theta(\log |\ssset|)$, for some $f$, since we can  add $f(r)-r$ new sets each with a unique dummy element and inflate the parameter to $f(r)$. However, this is not true for ETH based running time lower bounds as such inflation fail to transfer the lower bounds asymptotically.	 

    We would like to remark that, after our article was made public, Huairui Chu~\cite{https://doi.org/10.48550/arxiv.2302.09220} improved the lower bound of Theorem~\ref{thm:enuhard} to rule out $|\ssset|^{\Omega(r)}$ time.

	Finally, we extend our construction framework (Theorem~\ref{thm:ckvs}) to improve the running time lower bound (matching the trivial upper bound up to a $\log$ factor in the exponent) for the compact version of \rvsm: Given a collection $\cset$ of $N$ vectors in $\mathbb{F}^d_2$, and a target vector $\vec{b} \in \mathbb{F}^d_2$, \rvsm asks if there are $r$ vectors in $\cset$ that sum to $\vec{b}$. \crvs is defined when $d = f(r) \cdot \Theta(\poly(\log N))$, for some $f(r)\ge r$.
	\begin{theorem}\label{thm:ckvs}
		\crvs requires time $N^{\Omega(r/\log r)}$, even when $d=r \cdot \Theta(\log N)$, unless ETH fails.
	\end{theorem}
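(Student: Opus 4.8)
The plan is to give a parameter-preserving polynomial-time reduction \emph{to} \crvs \emph{from} a compact, \emph{regular} variant of the exact-covering problem, built on top of the construction behind Theorem~\ref{thm:cecpw}. Concretely, I would first record that the \fpt-reduction from \sgi proving Theorem~\ref{thm:cecpw} (parameter-preserving up to a multiplicative constant, producing $|\uset| = r\cdot\Theta(\log|\ssset|)$) can be taken to output an \textbf{$s$-regular} instance: every set in $\ssset$ has the same cardinality $s$ and $|\uset| = rs$. Then, exactly as Theorem~\ref{thm:enuhard} is obtained from the \sgi lower bound of~\cite{marx2007can} via a constant-factor parameter-preserving reduction, the same bound applied to this reduction shows that the regular compact \cecp cannot be solved in time $|\ssset|^{o(r/\log r)}$ unless ETH fails.

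Next I would encode such a \cecp instance $(\uset,\ssset,r)$ as a \crvs instance: identify $\uset$ with $\{1,\dots,d\}$ where $d=|\uset|$, let $\cset=\{\,\chi(S)\in\mathbb{F}_2^d : S\in\ssset\,\}$ be the characteristic vectors, set the target $\vec{b}=\vec{1}$ (the all-ones vector), and keep the budget $r$. The correspondence is: an exact $r$-cover partitions $\uset$, so its characteristic vectors sum to $\vec{1}$ over $\mathbb{F}_2$; conversely, if $r'\le r$ vectors $\chi(S_1),\dots,\chi(S_{r'})$ sum to $\vec{1}$, then every element of $\uset$ lies in an odd number of the $S_i$, hence in at least one, so $\sum_i |S_i|\ge|\uset|$, while $\sum_i|S_i|=r's\le rs=|\uset|$ by $s$-regularity; this forces $r'=r$ and every element to lie in exactly one $S_i$, i.e.\ an exact $r$-cover. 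This is the one place regularity is essential: it rules out the $\mathbb{F}_2$-cancellations that would otherwise let a non-exact cover hit the target $\vec{1}$.

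The reduction runs in polynomial time, preserves $r$ exactly, sets $N:=|\cset|=|\ssset|$, and yields $d=|\uset|=r\cdot\Theta(\log|\ssset|)=r\cdot\Theta(\log N)$, so the instance is compact with the claimed dimension; hence an $N^{o(r/\log r)}$-time algorithm for \crvs would solve the regular compact \cecp in time $|\ssset|^{o(r/\log r)}$, contradicting ETH. I expect the only genuine obstacle to be the first step: arranging that the \sgi\,$\to$\,\cecp construction is exactly $s$-regular with $|\uset|=rs$ while keeping $|\uset|=r\cdot\Theta(\log|\ssset|)$ and the budget $O(r)$ — a black-box regularization of an arbitrary exact-cover instance inflates either the universe (one private padding block per set) or the budget, so regularity has to be engineered inside the reduction, which is natural when each set of $\ssset$ is the uniform-size ``footprint'' of a single choice among $\Theta(\log|\ssset|)$-bit-encoded options.
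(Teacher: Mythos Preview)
Your reduction from exact cover to \rvsm via characteristic vectors with target $\vec{1}$ is the right template, and your counting argument (odd coverage $\Rightarrow$ coverage, then $\sum_i|S_i|=r's\le rs=|\uset|$ forces a partition) is correct \emph{given} $s$-regularity with $|\uset|=rs$. The gap is exactly the one you flag, and it is real: the construction behind Theorem~\ref{thm:cecpw} is \emph{not} regular. There the $E$-sets all have size $N$, while a $V$-set attached to $v_i$ has size $N+d(v_i)\,N/2$; so sizes differ between the two types, and also among the $V$-sets unless $H$ is regular. Regularizing is not free: equalizing the $V$-sets forces a constraint on the degree sequence of $H$ (so you must argue that the \sgi lower bound of~\cite{marx2007can} survives this restriction), and then you still have to pad every $E$-set up to the $V$-set size while keeping $|\uset|=r\cdot\Theta(\log|\ssset|)$ and the exact-cover equivalence intact. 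None of this is worked out in the proposal.

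The paper avoids regularity altogether by a different device. It appends $\ell+2k$ extra coordinates to each characteristic vector: one ``$V$-indicator'' coordinate per vertex of $H$ (set only in the $V$-vectors for that vertex) and two ``$E$-indicator'' coordinates per edge of $H$ (set only in the $E$-vectors for that edge). Summing to $\vec{1}$ on these coordinates over $\mathbb{F}_2$ forces at least $\ell$ $V$-vectors and at least $k$ $E$-vectors, hence exactly $\ell$ and $k$ within the budget $r=\ell+k$. With the type-counts pinned down, a \emph{per-type} size count (each $V$-set contributes $N$ elements to $U_0$ and $|U_0|=\ell N$; each $E$-set contributes $N$ elements to $U_1$, the $V$-sets together contribute $kN$, and $|U_1|=2kN$) shows the chosen sets must be pairwise disjoint on the original coordinates, hence an exact $r$-cover, after which the No case of Theorem~\ref{thm:cecpw} recovers the subgraph. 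In effect the paper trades your global $s$-regularity hypothesis for $O(k)$ indicator coordinates plus the per-type uniformity that the construction already enjoys, and thereby needs no assumption on $H$ at all. Your route could likely be completed (regular $H$ plus per-edge padding blocks), but as written the regularity step is an unfilled hole rather than a recorded fact.
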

    The present bound of \cite{bhattacharyya_et_al:LIPIcs:2016:6362} rules out $N^{o(\sqrt{r})}$ time under ETH. 
    The proof of this theorem is present in Appendix~\ref{sec:ckvs}.


	\subsection{Our contributions and comparison to existing works} \label{ss:comp}
	In this section, we compare our contribution with existing works to highlight its significance. 
	To our best knowledge, the compact version of combinatorial problems has not previously been formalized and investigated. However, several existing reductions already imply the hardness of compact version of some of the combinatorial problems. Here we review and compare the related results.
	
	\paragraph{\textbf{Our Contribution.}} 
	As far as we know, there are no results showing \wone-hardness of \cpsp, and hence the corresponding dichotomy (Theorem~\ref{thm:dich}). The key contribution of this paper is to show the hardness result for \cpsp.
	On the way, we also show an ETH-based almost tight running time lower bound for \cpsp, with tight (up-to $\log r$ factor) universe size  $|\uset|= r\cdot \Theta(\log |\ssset|)$.
	Interestingly, we show both of these results with a single \fpt reduction. In addition, we extend our framework to improve the running time lower bounds for \crvs.
	
    Next we survey some known hardness results for \scr in the compact regime and argue their limitations in extending them to \psp. In particular, \textbf{\cite[Lemma~25]{10.1007/978-3-642-40450-4_57}}  shows a reduction from \sgi to  a variant of \scr called \cecp, where we want to find an $r$-packing that is also a covering (in fact they show hardness for \cscp. But a closer inspection of their construction shows that the intended set cover is also a packing).
	The high level idea of the construction is similar to ours:  first assign each vertex of $G$ a logarithm length binary pattern vector. Then, create two kinds of sets: $V$-sets that capture the mapping of the vertices and $E$-sets that capture the mapping of edges. The idea is to use the pattern vectors to create these sets so that there is an isomorphic copy of $H$ in $G$  if and only if there are $|V_H|$ many $V$-sets and $|E_H|$ many $E$-sets covering the universe exactly once.
    However, if we consider the Soundness (No case) proof of this reduction, then  it crucially relies on the fact that no candidate solution can cover the entire universe exactly once.  In fact, it is quite easy to find $r$ sets that are mutually disjoint but do not form a cover.
	Therefore, it fails to yield hardness for \cpsp. The heart of our construction lies in ensuring that in the No case, any $r$ sets intersect. To this end, we construct a combinatorial gadget called  Compatible Intersecting Set System (ISS) pair. This gadget is a pair of set systems $(\aset,\bset)$ over a universe $U$ that guarantees two properties: First, every pair of sets within each set system intersects, and second, for any set $a \in \aset$, there exists $b\in \bset$ such that $a$  intersects every set in $\bset$ except $b$. Further, we present a simple greedy algorithm that finds such compatible ISS pair $(\aset,\bset)$ over a universe of size $N$, each having roughly $2^{\Omega(N)}$ sets. Note that this gadget, which we use to build our compact hard instance, also has a ``compact" universe.
    While, on the other hand, \textbf{\cite{10.1145/3188745.3188896}}	shows  \cscp is \wonehard using a reduction from \kclq to \scr with $r=\Theta(k^2)$ and $|\uset| = r^{3/2} \cdot \Theta(\log |\ssset|)$, but does not yield a tight ETH- based running time lower bound. 
	In contrast,	\cite{puatracscu2010possibility} shows such tight ETH lower bound for \cscp: requiring time $|\ssset|^{\Omega\left(r\right)}$, which can be easily modified to obtain a similar running time lower bound for \cecp	(by reducing from  \oneinthree, instead from \threesat).

	\subsection{Overview of Techniques} \label{ss:tech}
	In this section we sketch the main ideas of our hardness proof of Theorem~\ref{thm:dich}. To this end, we present a reduction from \sgi, which asks, given a graph $G$ on $n$ vertices and another graph $H$ with $k$ edges, if there is a subgraph of $G$ isomorphic (not necessarily induced)  to $H$, with parameter $k$. The reduction produces an instance $\iset = (\uset,\ssset,r)$ of \psp  in \fpt time  such that $r = \Theta(k)$ and $|\uset| = \Theta(r \log |\ssset|)$.   
	We remark that the classical reduction given in \cite{Downey:2012:PC:2464827} also has parameter $r = \Theta(k)$, but $|\uset|$ is linear in the size of $G$, which is the size of $|\ssset|$. Below we present a reduction that constructs a compact instance, but falls short in achieving its goal. However, it illustrates some of the main ideas that form the basis of the actual hardness proof. This failed attempt also highlights the crucial properties of the gadget that are necessary for the correct reduction.
	
	Our reduction constructs the instance $\iset = (\uset,\ssset,r)$ of \psp using a special set system gadget -- which we call the Intersecting Set System (ISS) gadget. A set system $\aset=(U_A,S_A)$ with $M$ sets over $N$ elements is called an $(M,N)$-Intersecting set system, if every pair $s^i, s^j \in S_A$ intersects (i.e., $s^i, s^j$ has a non-empty intersection).
	We show how to efficiently construct an $(M,N)$-ISS $\aset=(U_A,S_A)$ with $M = 2^{N-1}$.  Let $U_A = \{1,2,\cdots,N+1\}$. Then, for every subset $s\subseteq \{2,3,\cdots, N+1\}$, add the set $s' := \{1\} \cup s$ to $S_A$. Note that $\aset$ has a compact universe since $|U_A| = \log_2 M + 1 = \log_2 |S_A|+1$, which is  crucial in  constructing a compact instance of \psp.
	We are now ready to present the reduction using this compact ISS gadget.
	Let the given instance  of \sgi be $\jset = (G=(V_G,E_G), H=(V_H,E_H),k)$. Let $\ell := |V_H|$, $n:= |V_G|$ and $m:= |E_G|$. Further, let $V(G) = \{1,\cdots,n\}$.  Note that $\ell \le 2k$, since isomorphic sub-graph in $G$ to $H$ is not necessarily induced. 
	Let $\aset=(U_A,S_A)$ be the $(M,N)$-ISS gadget specified above with  $N=\lceil \log n \rceil +1$.
	Since $M \ge n$, assume $S_A=\{s^\alpha\}_{\alpha \in V(G)}$ by arbitrarily labeling sets in $S_A$ and ignoring the sets $s^\alpha, \alpha>n$.
	We construct an instance $\iset = (\uset,\ssset,r)$ of \cpsp as follows. 
	For every $v \in V_H$, and $w \in N_H(v)$, let $\aset_{v,w}=(U_{v,w}, S_{v,w})$ be a distinct copy of ISS $\aset$ (that is, the universes $\{U_{v,w}\}_{w \in N_H(v)}$ are disjoint) with the same labeling of sets as that of $S_A$. Note that $\aset_{v,w}$ and $\aset_{w,v}$ are distinct copies of $\aset$.
	Let $U_v:= \cup_{w \in N(v)} U_{v,w}$. The universe $\uset$ in $\iset$ is defined as $
	\uset = \cup_{v\in V_H} U_v.
	$
	Now for $\ssset$, we will construct two types of sets, that we call $V$-sets and $E$-sets. 
 	For every $\alpha \in V(G)$ and $v \in V_H$, add the set $S_{\alpha \mapsto v} := \cup_{w \in N(v)} s^\alpha_{v,w}$ to $\ssset$.
	These sets are referred  as $V$-sets. 
	For each edge $(\alpha,\beta) \in E_G$ and each edge $(v,w) \in E_H$, add the set $S_{(\alpha,\beta) \mapsto (v,w)} := \bar{s}^\alpha_{v,w} \cup \bar{s}^\beta_{w,v}$ to $\ssset$. 
	These sets are called $E$-sets. 
	Finally, setting the parameter $r=\ell + k$, concludes the construction of \psp instance $\iset = (\uset,\ssset,r)$. First, note that for the base ISS gadget $\aset = (U_A,S_A)$, we have that $|U_A| = \Theta(\log n)$. Hence, $|\uset| = \sum_{i \in [\ell]} \sum_{j \in [d(v_i)]} |U_A| = \Theta(k \log n)$, where as $|\ssset| = \Theta(mk+n\ell) = \Theta(n^2k)$. Since $r =\Theta(k)$, we have  $|\uset| = \Theta(r \log |\ssset|)$, yielding a \cpsp instance.

	To illustrate the main ideas, we analyze the completeness and discuss how the soundness fails. 
	In the completeness case, let us assume that there exists an injection $\phi: V_H \rightarrow V_G$ which specifies the isomorphic subgraph in $G$. 
    Consider $T \subseteq \ssset$ as $T = \{S_{\phi(v) \mapsto v}\}_{v \in V_H} \bigcup \{S_{(\phi(v), \phi(w)) \mapsto (v,w)} \}_{(v,w) \in E_H}$.
	Notice that we have chosen $\ell+k$ sets from $\ssset$. 
    To see that $T$ forms a packing, fix $v,w \in V_H$ such that $w \in N(v)$.
	Let $T\mid_{U_{v,w}}$ be the restriction of $T$ on $U_{v,w}$ (Formally, $T\mid_{U_{v,w}} := \{t \cap U_{v,w} : t \in T\}$).
	Then note that $T\mid_{U_{v,w}}$ forms a packing since
	$
	T\mid_{U_{v,w}} = \{ s^{\phi(v)}_{v,w},\  \bar{s}^{\phi(v)}_{v,w} \}.
	$
	
	For soundness, we show the proof for a simpler case. Suppose $T \subseteq \ssset$ with $|T|= r$ is a packing with at most one $V$-set from each vertex of $G$. Further, assume $|T_V| = \ell$ and $|T_E| = k$, where $T_V$ and $T_E$ denote the $V$-sets and $E$-sets of $T$ respectively. Finally, we also assume that $T$ covers $\uset$.
 	Let $V_H = \{v_1,\cdots,v_\ell\}$. Relabel the sets in $T_V$ as  $T_V=\{T^i_V : \exists S_{\alpha\mapsto v_i} \in T_V, \text{ for some } \alpha \in V_G, v_i \in V_H \}$.
	Now consider $V' := \{\alpha \mid T_V^i = S_{\alpha \mapsto v_i}\} \subseteq V_G$, and relabel the vertices of $V'$ as
	$V' = \{\alpha'_1, \cdots, \alpha'_\ell\}$, where $\alpha'_i = \alpha$ such that $T_V^i = S_{\alpha \mapsto v_i}$. We claim that $G[V']$ is isomorphic to $H$ with injection $\phi:V_H \rightarrow V_G$ such that $\phi(v_i) = \alpha'_i$. To this end, we show  $(v_i,v_j) \in E_H \implies (\phi(v_i),\phi(v_j)) \in E_{G[V']}$. Note that since $\aset_{v_i,v_j}$ is an ISS, $T_V\mid_{U_{v_i,v_j}} = T^i_V\mid_{U_{v_i,v_j}} = s^{\alpha'_i}_{v_i,v_j}$. Further, combining the fact that $T$ is a packing covering $\uset$ with the fact $|T_E|=k$, we have that  $T_E\mid_{U_{v_i,v_j}} =  \bar{s}^{{\alpha}'_i}_{v_i,v_j}$.
	Similarly, it holds that $T_V\mid_{U_{v_j,v_i}} = T^j_V\mid_{U_{v_j,v_i}} = s^{\alpha'_j}_{v_j,v_i}$, and hence $T_E\mid_{U_{v_j,v_i}} =  \bar{s}^{{\alpha}'_j}_{v_j,v_i}$. But this implies that $S_{(\alpha'_i,\alpha'_j) \mapsto (v_i,v_j)} \in T_E$, which means  $(\phi(v_i),\phi(v_j)) = (\alpha'_i,\alpha'_j)  \in E_{G[V']}$, as desired.
    However, for the general case, we would require a gadget that enforces all the above assumptions in any candidate packing.
		
		

		\subsection{Open problems.}
	 An interesting direction is \fpt approximating \cpsp: Given a promise that there is an $r$-packing, is it possible to find a packing of size $\omega(1)$ in \fpt time? Note that for the general \psp problem, there is no $o(r)$ \fpt-approximation, assuming Gap-ETH. However,  recent results~\cite{DBLP:conf/stoc/Lin21,s2021polynomial} use a weaker assumption of $\wone \ne \fpt$ but also obtain weaker \fpt-inapproximibility.
  It is also interesting to show such hardness of approximation  for \cpsp.

	\section{Preliminaries} \label{sec:prel}
	
	\subsection{Notations}
 For $q \in \mathbb{N}$, denote by $[q]$, the set $\{1,\cdots, q\}$. For a finite set $[q]$ and $i \in [q]$, we overload $`+'$ operator and denote by $i+1$ as the (cyclic) successor of $i$ in $[q]$. Thus, the successor of $q$ is $1$ in $[q]$. 
	All the $\log$s are in base $2$.
	For a graph $G=(V,E)$ and a vertex $v \in V$, denote by $N(v)$, the set of vertices adjacent to $v$. Further, $d(v)$ denotes the degree of $v$, i.e., $d(v) := |N(v)|$. 
	For a finite universe $U$ and $s \subseteq U$, denote by $\bar{s}$ as the complement of $s$ under $U$, i.e., $\bar{s} := U \setminus s$. Similarly, for a family of sets $S=\{s_1,\cdots,s_M\}$ over $U$, we denote by $\cmp(S)=\{\bar{s}_1,\cdots,\bar{s}_M\}$. 
	Further, for a subset $s \subseteq U$ and a sub-universe $U' \subseteq U$, denote by $s \mid_{U'}$ as the restriction of $s$ on sub-universe $U'$, i.e., $s \mid_{U'} := s \cap U'$. Similarly, for a family of sets $S=\{s_1,\cdots,s_M\}$ over $U$, denote by $S\mid_{U'}$ as the restriction of every set of $S$ on $U'$, i.e., $S\mid_{U'}:= \{s_1\mid_{U'},\cdots,s_M\mid_{U'}\}$. For a set system $A = (U_A,S_A)$, we denote the complement set system by $\bar{A} = (U_A,\cmp(S_A))$. For $s,t \subseteq U$,  we say $s$ and $t$ intersects if $s \cap t \neq \emptyset$.
	
	

	\subsection{Parameterized Complexity} \label{app:pc}
	The parameterized complexity theory concerns computational aspects of languages $(L,\kappa)$ over a fixed and finite alphabet $\Sigma$, where $L \subseteq \Sigma^*$, and $\kappa : \Sigma^* \rightarrow \mathbb{N}$, called the parameter, is a polynomial time computable function. 
	Thus a parameterized problem is a classical problem together with a parameter $\kappa$. As an example consider the following  classical NP-complete problem.
	
	\noindent\fbox{%
		\parbox{\textwidth}{%
			\clq\\
			\textit{Instance: } A graph $G$ and  $k \in \mathbb{N}$\\
			\textit{Problem: } Decide if $G$ has a clique of size $k$
		}%
	}
	\\
	
	Now consider a parameterized version of \clq defined by $\kappa(G,k) := k$.
	
	\noindent\fbox{%
		\parbox{\textwidth}{%
			\kclq\\
			\textit{Instance: } A graph $G$ and  $k \in \mathbb{N}$\\
			\textit{Parameter: } $k$\\
			\textit{Problem: } Decide if $G$ has a clique of size $k$
		}%
	}
	When the parameter $\kappa$ represents the size of  solution, then it is called \textit{natural} parameter. 
	\begin{definition} [Fixed Parameter Tractable]
		A parameterized problem $(L,\kappa)$ is called \textit{fixed parameter tractable} if there is an algorithm $A$, a constant $c$ and a computable function $f: \mathbb{N} \rightarrow \mathbb{N}$ such that on all inputs $y=(x,k)$, $A$ decides whether $x$ is in $L$  and runs in time at most $f(k)\cdot |x|^c$.
	\end{definition}
	The complexity class \fpt is the set of all fixed parameter tractable problems. 
	In this paper, we consider parameterized problems with natural parameter i.e. $\kappa$ represents the size of solution.
	Once we define the class \fpt, the next natural thing is to define \textit{parameterized reduction} or \textit{\fpt-reduction} with the intention that such a reduction from parameterized problem $Q$ to another parameterized problem $Q'$ allows converting an \fpt algorithm of $Q'$ to an \fpt algorithm of $Q$.
	
	\begin{definition} [\fpt-reduction]
		An \fpt-reduction from $Q \subseteq \Sigma^* \times \mathbb{N}$ to $Q' \subseteq \Sigma^* \times \mathbb{N}$ is an algorithm $R$ mapping from $\Sigma^* \times \mathbb{N} $ to $\Sigma^* \times \mathbb{N}$ such that for all $y=(x,k), R(y) \in Q'$ if and only if $y \in Q$, and for some computable function $f$ and a constant $c$, $R(y)$ runs in time $f(k)\cdot |x|^c$ and $R(y)=(x',k')$, where $k'\le g(k)$ for some computable function $g$.
	\end{definition}
	An extensive treatment of the subject can be found in \cite{10.5555/2815661,Downey:2012:PC:2464827,Flum:2006:PCT:1121738}.

	\subsection{Problem definitions} 
	\begin{definition}[\pspfull (\psp)]
		Given a collection of sets $\ssset =\{S_1,\dots, S_m\}$ over an universe $\uset = \{e_1,\dots,e_n\}$ and an integer $r$, the \psp problem asks if there is a collection of sets $\mathcal{S}' \subseteq \mathcal{S}$ such that $|\mathcal{S}'| = r$ and, $S_i \cap S_j = \emptyset$ for every $S_i \ne S_j \in \mathcal{S'}$.
		An instance of \psp  is denoted as $ (\uset,\ssset,r)$.
	\end{definition}
 \cpsp is defined when the instances have $|\uset|= f(r) \cdot \Theta(\poly(\log |\ssset|))$, for some function $f(r)\ge r$.
 
	Given two graphs $G=(V_G,E_G)$ and $H=(V_H,E_H)$, a homomorphism from $H$ to $G$ is a map $\phi: V_H \rightarrow V_G$ such that if $(v_i,v_j) \in E_H$ then $(\phi(v_i), \phi(v_j)) \in E_G$.
	\begin{definition}[\sgifull (\sgi)]
		Given a graph $G=(V_G,E_G)$ and a smaller graph $H=(V_H,E_H)$ with $|E_H|= k$, the \sgi problem asks if	there is an injective homomorphism  from $H$ to $G$.
		An instance of \sgi is denoted as $(G=(V_G,E_G), H=(V_H,E_H),k)$.
	\end{definition}
	The parameterized version of \sgi has parameter $\kappa = |E_H| = k$.
	Without loss of generality, we  assume 
 $|V_H| \le 2k$, and every vertex of $H$ has degree at most $k$.

\section{Dichotomy of \psp} \label{sec:mainpf}
	In this section we prove the hardness part of the dichotomy theorem (Theorem~\ref{thm:dich}). First in Section~\ref{ss:iss}, we identify the gadget and its associated properties that are crucial for the reduction.
	Then, in Section~\ref{ss:dihardness}, using this gadget, we show an  \fpt-reduction from \sgi to \cpsp. 

	\subsection{Compatible Intersecting Set System Pair} \label{ss:iss}
	A set system $\aset = (U_A,S_A)$ is called an $(M,N)$-Intersecting set system (ISS), if it contains $M$ sets over $N$ elements such that  every pair $s,t \in S_A$ intersects.
	\begin{definition} [Compatible ISS pair]
		Given two ISS $\aset=(U,S_A)$ and $\bset=(U,S_B)$ on a universe $U$, we say that $(\aset,\bset)$ is a compatible ISS pair if there exists an efficiently computible bijection $f: S_A \rightarrow S_B$ such that
		\begin{itemize}
			\item(Complement partition) $\forall s \in S_A$, $s$ and $f(s)$ forms a partition of $U$, and
			\item(Complement exchange)   $\forall s \in S_A$,  $\aset_s :=(U,(S_A \setminus \{s\}) \cup \{f(s)\})$ is an ISS.
		\end{itemize}
	\end{definition}
	Since $f$ is as bijection, we have $|S_A| = |S_B|$, and $\forall t \in S_B$, the set system $\bset_{t} :=(U,(S_B \setminus \{t\}) \cup \{f^{-1}(t)\})$ is also an ISS. Also, for $(s,t) \in (S_A,S_B)$ if $s \cup t = U$, then $t=f(s)$.
	%
	%
    The following lemma efficiently constructs  a compatible $(M,N)$-ISS pair, which is a key ingredient in our hardness proof. 
	\begin{lemma}\label{lm:spset}
		For even $N \ge 2$, we can compute a compatible $(M,N)$-ISS pair $(\aset,\bset)$ with $M \ge 2^{N/2-1}$ in time polynomial in $M$ and $N$. Further, $\bset = \bar{\aset}$.
	\end{lemma}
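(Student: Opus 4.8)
The plan is to build the pair $(\aset,\bset)$ by a single greedy sweep over a cyclically-ordered ground set, which simultaneously produces both set systems and the bijection $f$. Take $U = [N]$ with the cyclic successor/predecessor operations from the preliminaries. The natural candidate for a large intersecting family on an even-size universe is the collection of all $N/2$-element subsets that are \emph{complement-free}: if $s$ is in the family then $\bar s$ is not. There are $\frac12\binom{N}{N/2}\ge 2^{N/2-1}$ such sets, so this already meets the cardinality bound $M\ge 2^{N/2-1}$, and any two of them intersect because $|s|+|t| = N$ forces $s\cap t=\emptyset$ only when $t=\bar s$, which is excluded. The first step is therefore to fix such a complement-free family of $N/2$-subsets as $S_A$, and to set $f(s) := \bar s$ for each $s\in S_A$; then define $S_B := \{\bar s : s\in S_A\} = \cmp(S_A)$ and $\bset := (U,S_B)$.

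Next I would verify the three required properties. Complement disjointness is immediate: $s\cap f(s) = s\cap\bar s=\emptyset$. That $\bset$ is itself an ISS follows from the same counting argument applied to $S_B$: $S_B$ is again a complement-free family of $N/2$-subsets (the complement of a complement-free family is complement-free), so every pair in $S_B$ intersects. The remaining, and in my view the only nontrivial, point is \emph{complement exchange}: for every $s\in S_A$, the family $(S_A\setminus\{s\})\cup\{\bar s\}$ must still be intersecting. Here the potential failure is a set $t\in S_A$, $t\ne s$, with $t\cap \bar s=\emptyset$, i.e.\ $t\subseteq s$; since $|t|=|s|=N/2$ this forces $t=s$, a contradiction. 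So in fact \emph{any} complement-free family of exactly-$N/2$-subsets automatically has the complement-exchange property, because all sets have the same size $N/2$ and no containment among distinct equal-size sets is possible. By Observation~\ref{obs:compiss} the symmetric statement for $\bset$ needs nothing extra.

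So the ``greedy algorithm'' reduces to: enumerate subsets of $[N]$ of size $N/2$, and include $s$ in $S_A$ iff it is the lexicographically smaller of $\{s,\bar s\}$ (a simple way to guarantee complement-freeness and that exactly one of each complementary pair is taken); output $(\aset,\bset,f)$ with $f(s)=\bar s$. Each membership test is $O(N)$, there are $\binom{N}{N/2}\le 2^N$ candidates, and $M=|S_A| = \frac12\binom{N}{N/2}$, so the running time is $O(N\cdot 2^N) = \mathrm{poly}(M,N)$ since $M = \Theta(2^N/\sqrt N)$. The one place to be careful in writing this up is the edge case $N=2$, where $\binom{2}{1}=2$ gives $M=1$ and $2^{N/2-1}=1$, so the bound is tight but still holds; all the intersection arguments above are vacuous or trivial there. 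I expect the write-up obstacle to be purely expository — stating cleanly why equal cardinality $N/2$ of every set kills every possible violation of both the ISS condition and the exchange condition — rather than any real mathematical difficulty.
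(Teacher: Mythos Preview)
Your proposal is correct and matches the paper's proof essentially step for step: the paper also takes $S_A$ to be a complement-free family of all $N/2$-subsets of $[N]$ (selected by a greedy marking rather than lexicographic choice, but the result is the same), sets $f(s)=\bar s$, and verifies the ISS and complement-exchange properties via the same size argument that $|s|=|t|=N/2$ with $t\ne\bar s$ forces $s\cap t\ne\emptyset$. The only cosmetic difference is that the paper phrases the exchange check as $\bar s\cap t=\bar s\setminus\bar t\ne\emptyset$, which is equivalent to your ``$t\subseteq s$ forces $t=s$'' observation.
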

	\begin{proof}
		Given even $N \ge 2$, we construct two set systems $\aset =(U,S_A)$ and $\bset=(U,S_B)$ greedily as follows. First set $U = [N]$, and unmark all the subsets $s \subseteq U$ of size $\nicefrac{N}{2}$. Then, for every subset $s \subseteq U, |s|=\nicefrac{N}{2}$ that is unmarked, add $s$ to $S_A$ and $\bar{s}$ to $S_B$, and mark both $s$ and $\bar{s}$.
		Note that $|S_A|= |S_B|$, and $\bset = \bar{\aset}$. First, we claim that both $\aset$ and $\bset$ are $(|S_A|, N)$-ISS. Indeed, observe that any $s, t \in S_A$  $(s,t\in S_B)$ intersects since $|s| = |t| = N/2$ and $t \ne \bar{s}$. Next, for lower bounding $M$, we have $|S_A| = |S_B| = \frac{1}{2} {N \choose {N/2}} \ge 2^{N/2-1}$.
		The total time to construct $(\aset,\bset)$ is $2^{N} \poly(N) = \poly(M,N)$.
        Finally, to see that $(\aset,\bset)$ is a compatible ISS pair, consider the bijection $f: S_A \mapsto S_B$ with $f(s) = \bar{s}$, for $s \in S_A$, and note that for $s \in S_A$,  $\aset_s = (U,(S_A \setminus \{s\}) \cup \{f(s)\})$ is an ISS. This is because for $s \in S_A$ and any ${t} \in {S_A} \setminus \{s\}$, since $|\bar{s}| = |\bar{t}| = \frac{N}{2}$ and $\bar{t} \ne \bar{s}$, we have $\bar{s} \setminus \bar{t} \ne \emptyset$. But $\bar{s} \cap {t} = \bar{s} \setminus \bar{t} \ne \emptyset$. Hence, $f(s)$ intersects $t$.
		\qed
	 \end{proof}

	\subsection{Hardness of \cpsp} \label{ss:dihardness}
	Our hardness result follows from the following \fpt-reduction from \sgi that yields compact instances of \psp and the fact that \sgi is \wonehard.
	\begin{theorem} \label{thm:sgired}
		There is an \fpt-reduction that, for every instance $\iset = (G=(V_G,E_G),H=(V_H,E_H),k)$ of \sgi with $|V_G|=n$ and $|E_G|=m$, computes $\mu = O(k!)$ instances $\jset_p=(\uset_p,\ssset_p,r), p \in [\mu]$ of  \psp with  $|\uset_p|= \Theta(k \log n)$, $|\ssset_p|= \Theta(n^2k + mk)$, and $r=\Theta(k)$,
		such that there is a subgraph of $G$ isomorphic to $H$ if and only if there is an $r$-packing in at least one of the instances $\{\jset_p\}_{p \in [\mu]}$.
	\end{theorem}
	
	\begin{proof}
	    The construction follows the approach outlined in Section~\ref{ss:tech}.
		Let $V_G = \{1,\cdots,n\}$. Let $(\aset,\bar{\aset})$ be the compatible $(M,N)$-ISS pair given by Lemma \ref{lm:spset}, for $N = 2\lceil \log (n+1) \rceil +2$. We call $\aset= (U_A,S_A)$  and $\bar{\aset}= (U_A,\cmp(S_A))$ as the base ISS gadgets.
		Further, assume an arbitrary ordering on $S_A= \{s^1,\cdots,s^M\}$. 
		Since $M \ge 2^{N/2-1}  > n$, every $\alpha \in V_G$ can  be identified by the set $s^\alpha \in S_A$ corresponding to the index $\alpha \in [M]$. 
		For each ordering $p: V_H \rightarrow [\ell]$, create an instance $\jset_p = (\uset_p, \ssset_p,r)$ of \cpsp as follows. Rename the vertices of $V_H$ as $\{v_1,\cdots,v_\ell \}$ with $v_i := v \in V_H$ such that $p(v)=i$. 
		For each $v_i \in V_H$, create a collection $\cset_{v_i}$ of $d(v_i)+1$ many different copies of base ISS gadget $\aset$ (i.e., each has its own distinct universe) as:
		$
		\cset_{v_i} :=\{\aset_{v_i,0}, \{\aset_{v_i,w}\}_{w \in N(v_i)} \}
		$,
		where $\aset_{v_i,0} = (U_{v_i,0},S_{v_i,0}) \text{ and } \aset_{v_i,w} = (U_{v_i,w},S_{v_i,w}).$
	    Let  $U_{v_i} = \cup_{w \in N(v)} U_{v_i,w}$.
		For each $\cset_{v_i}$, let $U_{\cset_i} = U_{v_i,0} \cup U_{v_i}$. Now, we define the universe $\uset_p$ of $\jset_p$ as $	\uset_p = \bigcup_{i \in [\ell]} U_{\cset_i}$.
		
        The sets in $\ssset_p$ are of two types: $V$-sets and $E$-sets as defined below.
		For $\alpha \in V_G$ and $v \in V_H$, denote by $S^\alpha_v = \cup_{w \in N(v)} s^\alpha_{v,w}$. Recall that for $\alpha \in V_G$ and $(v,w) \in E_H$, the set $s^\alpha_{v,w}$ is the $\alpha^{th}$ set in $S_{v,w}$ of ISS $\aset_{v,w}=(U_{v,w},S_{v,w})$. 
		
		\textit{$V$-sets:} For each $\alpha \in V_G$, for each $v_i \in \{v_1,\cdots, v_{\ell-1}\}$, and for each $\beta \in V_G, \beta > \alpha$, add a set $S_{\alpha \mapsto v_i, \beta}$ to $\ssset_p$ such that 
		\[
		S_{\alpha \mapsto v_i, \beta} := {s}^{\alpha}_{v_i,0} \; \bigcup \; S^\alpha_{v_i} \;  \bigcup \; \bar{s}^{\beta}_{v_{i+1},0}
		\]
		Further, for  each $\alpha \in V_G$, and for each $\beta \in V_G, \beta < \alpha$, add a set $S_{\alpha \mapsto v_\ell, \beta}$ to $\ssset_p$ such that 
		\[
		S_{\alpha \mapsto v_\ell, \beta} := {s}^{\alpha}_{v_\ell,0} \; \bigcup \; S^\alpha_{v_\ell} \;  \bigcup \; \bar{s}^{\beta}_{v_{1},0} 
		\]
		
		\textit{$E$-sets:} For each edge $(\alpha,\beta) \in E_G$ and each edge $(v_i,v_j) \in E_H$, add a set $S_{(\alpha,\beta) \mapsto (v_i,v_j)}$ to $\ssset_p$ such that
		\[
		S_{(\alpha,\beta) \mapsto (v_i,v_j)} := \bar{s}^\alpha_{v_i,v_j}  \bigcup \bar{s}^{\beta}_{v_j,v_i}
		\]
		
		\textit{Parameter:} Set $r := k + \ell$.

   This concludes the construction. Before we prove its correctness, we note the size of the constructed instance $\jset_p$. First,  $r= \Theta(k)$, since $\ell \le 2k$. Then, 
		$
		|\uset_p| = \sum_{i=1}^{\ell} |U_{\cset_i}| = \sum_{i=1}^{\ell} (d(v_i) + 1) N = \Theta(k\log n),
		$
		and
		$|\ssset_p| = \Theta( n^2 \ell + mk) = \Theta(n^2k)$.
  
        \paragraph{\textbf{Yes case.}}
        Suppose there is a subgraph $G'=(V_{G'},E_{G'})$ of $G$ that is isomorphic to $H$ with injection $\phi:V_{H} \rightarrow V_{G'}$. Let  $V_{G'}=\{\alpha_{1},\alpha_{2},\cdots, \alpha_{\ell}\} \subseteq [n]$ such that $\alpha_1<\alpha_{2} <\cdots<\alpha_{\ell}$. 
		Relabel the vertices of $H$ as $\{v_1,\cdots,v_\ell\}$, where $v_i := \phi^{-1}(\alpha_{i}), i \in [\ell]$.
		Now, consider the ordering $p$ of $V_H$ such that $p(v_i)=i$, for $i \in [\ell]$, and fix the corresponding instance $\jset_p=(\uset_p,\ssset_p,r)$. Consider the following collection of $V$-sets and $E$-sets:
		$
		T_V := \bigcup_{i \in [\ell]} S_{\alpha_{i} \mapsto v_i, \alpha_{i+1}}$ and $T_E := \bigcup_{(v_i,v_{j}) \in E_H} S_{(\alpha_{i}, \alpha_{j}) \mapsto (v_i,v_{j})}.
		$
		Let $T=T_V \cup T_E$. Note that for this choice of $p$, we have $T_V \subseteq \ssset_p$ due to construction, and $T_E \subseteq \ssset_p$ due to $\phi$, and hence $T \subseteq \ssset_p$. Further, $|T| = |T_V| + |T_E| = \ell + k=r$, as required. Now,  we claim that $T$ forms a packing in $\jset_p$. 
		Towards this goal, note that it is sufficient to show that the sets in $T\mid_{U_{\cset_i}}$ are mutually disjoint, for all $i \in [\ell]$. To this end, it is sufficient to show that both $T \mid_{U_{v_i,0}}$ and $T \mid_{U_{v_i}}$  are packing, for all $i \in [\ell]$. Fix $i \in [\ell]$, and consider the following cases:
		\begin{enumerate}
			
        \item  $T \mid_{U_{v_i,0}}:$ Since $T_E \mid_{U_{v_i,0}} = \emptyset$ by construction, we focus on $T_V \mid_{U_{v_i,0}}$.
			But, $T_V \mid_{U_{v_i,0}}$ is a packing since,
			\begin{align*}
			T_V \mid_{U_{v_i,0}} =
			\begin{cases}
			\{S_{\alpha_{i-1} \mapsto {v_{i-1}}, \alpha_i}\mid_{U_{v_i,0}},  S_{\alpha_{i} \mapsto v_i, \alpha_{i+1}}\mid_{U_{v_i,0}}\} = \{\bar{s}^{\alpha_i}_{v_i,0}, s^{\alpha_i}_{v_i,0}\}, \text{if }i \ne 1 \\
			\{S_{\alpha_{\ell} \mapsto {v_{\ell}}, \alpha_1}\mid_{U_{v_1,0}},  S_{\alpha_{1} \mapsto v_1, \alpha_{2}}\mid_{U_{v_1,0}}\}=\{\bar{s}^{\alpha_1}_{v_1,0}, s^{\alpha_1}_{v_1,0}\}, \text{ if }i = 1. 
			\end{cases}
			\end{align*}
			
		  \item $T\mid_{U_{v_i}}:$ It is sufficient to show that $T \mid_{U_{v_i,v_j}}$ is a packing, $\forall v_j \in N(v_i)$. But this follows since, $\forall v_j \in N(v_i)$,
			\begin{align*}
			T \mid_{U_{v_i,v_j}} &= \{T_V \mid_{U_{v_i,v_j}}, T_E \mid_{U_{v_i,v_j}}\} \\
			&= \{ S_{\alpha_i\mapsto v_i, \alpha_{i+1}}\mid_{U_{v_i,v_j}}, S_{(\alpha_i,\alpha_j) \mapsto (v_i,v_j)}\mid_{U_{v_i,v_j}} \}
			= \{s^{\alpha_i}_{v_i,v_j},\bar{s}^{\alpha_i}_{v_i,v_j} \}.
			\end{align*}
		\end{enumerate}
  
		\paragraph{\textbf{No case.}}
		Suppose there is an $r$-packing $T \subseteq \ssset_p$ in some instance $\jset_p, p\in[\mu]$, then we show that there is a subgraph $G_T$ of $G$ that is isomorphic to $H$. First note that $p \in [\mu]$ gives a labeling $\{v_1,\cdots, v_\ell\}$ of $V_H$ such that $v_i=p^{-1}(i)$, for $i\in [\ell]$.
		Next, partition $T$ into $T_V$ and $T_E$, such that $T_V$ and $T_E$ correspond to the $V$-sets and $E$-sets of $T$ respectively. This can be easily done since $t \in T$ is a $V$-set if and only if $t\mid_{U_{v_i,0}} = s^{\alpha}_{v_i,0} \in \aset_{v_i,0}$, for some $\alpha \in V_G, v_i \in V_H$. Let
		$
		U_0 = \{U_{v_i,0}\}_{v_i \in V_H}$ {and} $U_1 = \{U_{v_i}\}_{v_i \in V_H}
		$.
		We claim the following.
		\begin{lemma}\label{lm:setsizes}
			$|T_V| = \ell$ and $|T_E| = k$.  
		\end{lemma}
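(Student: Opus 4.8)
The plan is to show the two equalities $|T_V| = \ell$ and $|T_E| = k$ by a counting argument on how the packing $T$ covers the two groups of sub-universes, $U_0 = \{U_{v_i,0}\}$ and $U_1 = \{U_{v_i}\}$. First I would record the relevant sizes: each base ISS gadget has universe of size $N$, so $|U_{v_i,0}| = N$ and $|U_{v_i}| = d(v_i) N$, giving $|\uset_p| = (2k+\ell)N$. Next I would classify how each type of set meets these universes. A $V$-set $S_{\alpha \mapsto v_i,\beta}$ contributes $s^\alpha_{v_i,0}$ (size $N/2$) to $U_{v_i,0}$, a full $S^\alpha_{v_i}$ (size $d(v_i)N/2$) to $U_{v_i}$, and $\bar s^\beta_{v_{i+1},0}$ (size $N/2$) to $U_{v_{i+1},0}$; so a $V$-set occupies exactly $(d(v_i)+2)N/2$ elements of $\uset_p$, split as $N$ elements in $U_0$ and $d(v_i)N/2$ elements in $U_1$. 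An $E$-set $S_{(\alpha,\beta)\mapsto(v_i,v_j)}$ meets only $U_1$, contributing $\bar s^\alpha_{v_i,v_j} \cup \bar s^\beta_{v_j,v_i}$, i.e. exactly $N$ elements, all in $U_1$, and nothing in $U_0$.

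The key structural observation — and the step I expect to be the main obstacle — is that within each single sub-universe $U_{v_i,0}$ the packing can contain at most two sets meeting it, and they must be an $s$-part and its complement $\bar s$-part: this is exactly the ``complement exchange'' / ISS property, since any two of the $N/2$-sized base sets intersect, as do any set and a non-complementary $\bar s$. Concretely, the sets in $T$ meeting $U_{v_i,0}$ are: at most one $V$-set of the form $S_{\alpha\mapsto v_i,\cdot}$ contributing $s^\alpha_{v_i,0}$, and at most one $V$-set of the form $S_{\cdot \mapsto v_{i-1},\alpha}$ contributing $\bar s^{\alpha}_{v_i,0}$ (here the cyclic successor structure is used), and these two can coexist in a packing only if the two $\alpha$'s agree. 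Hence each $U_{v_i,0}$ is covered by at most $N$ of its $N$ elements, with equality iff $T$ contains one ``outgoing'' $V$-set at $v_i$ and one ``incoming'' $V$-set at $v_{i-1}$ pointing to the same vertex $\alpha$ — in particular the number of $V$-sets in $T$ is at most $\ell$ (one per vertex $v_i$, by the outgoing count), and likewise at least $\ell$ would follow from a covering-deficit argument.

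To finish, I would combine the counts. Suppose $|T_V| = \ell - a$ and $|T_E| = k - b$ with $a,b \ge 0$ and $a+b = (\ell+k) - r = 0$ forced by $|T| = r$; wait — more carefully: $|T_V|+|T_E| = r = k+\ell$, so write $|T_V| = \ell + t$, $|T_E| = k - t$ for some integer $t$. The elements of $U_1$ covered by $T$ total $\sum_{V\text{-sets}} d(v_i)N/2 + |T_E|\cdot N \le kN + (k-t)N$ is not directly bounded by $|U_1| = 2kN$ in a useful direction, so instead I use $U_0$: the $U_0$-mass covered is exactly (number of $V$-sets of outgoing type at distinct $v_i$) $\cdot (N/2)$ plus (incoming type) $\cdot(N/2) \le \ell N$, and since $|U_0| = \ell N$ while $T$ being a packing forces the covered $U_0$-mass to equal $N\cdot(\text{number of }v_i\text{ with a matched in/out pair})$, one gets that all $\ell$ sub-universes $U_{v_i,0}$ are fully covered, which forces $t \ge 0$ and moreover exactly one outgoing $V$-set per $v_i$, i.e. $|T_V| = \ell$; then $|T_E| = r - \ell = k$. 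The cyclic ``$\beta$'' bookkeeping in the $V$-sets is precisely what couples the vertices into a single cycle so that no $U_{v_i,0}$ can be left partially uncovered while another is over-covered — this is the delicate point to get right, and I would prove it by showing the ``outgoing target equals next vertex's incoming source'' constraint propagates around the cycle $v_1 \to v_2 \to \cdots \to v_\ell \to v_1$.
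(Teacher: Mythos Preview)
Your proposal has a genuine gap in the direction $|T_V| \ge \ell$ (equivalently $|T_E| \le k$). You correctly obtain $|T_V| \le \ell$: at most one ``outgoing'' $V$-set can touch each $U_{v_i,0}$ because the $s^{\alpha}_{v_i,0}$ parts lie in the ISS $\aset_{v_i,0}$ and hence pairwise intersect. (The paper gets the same inequality by a simpler pure size count: each $V$-set occupies exactly $N$ elements of $U_0$ and $|U_0|=\ell N$.) The trouble is the other inequality. Your argument there reads ``since $|U_0|=\ell N$ \ldots\ one gets that all $\ell$ sub-universes $U_{v_i,0}$ are fully covered, which forces $t\ge 0$''. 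But nothing in the hypotheses says $T$ covers $U_0$: all you know is that $T$ is a \emph{packing} of size $r$. The fact that $T$ covers $\uset_p$ is precisely Lemma~\ref{cl:tcovu}, which is proved \emph{after} (and using) the present lemma, so invoking it here is circular. Your ``covering-deficit'' and ``cycle propagation'' remarks have the same problem: they presuppose full coverage of $U_0$.

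The missing idea is to bound $|T_E|$ from above directly, independently of coverage. Observe that $U_1$ decomposes into $2k$ sub-universes $U_{v_i,v_j}$, one for each ordered pair with $v_j\in N(v_i)$. Any $E$-set restricted to such a sub-universe is a set $\bar{s}^{\alpha}_{v_i,v_j}$ of the ISS $\bar{\aset}_{v_i,v_j}$; hence two $E$-sets touching the same $U_{v_i,v_j}$ must intersect, so each of the $2k$ sub-universes is touched by at most one $E$-set in the packing $T$. Since every $E$-set touches exactly two of them, $|T_E|\le k$. Combined with $|T_V|\le\ell$ and $|T_V|+|T_E|=r=\ell+k$, both equalities follow. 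This is exactly the paper's argument; note that a pure size count on $U_1$ (which you tried and discarded) only gives $|T_E|\le 2k$, so the ISS property of $\bar{\aset}$ is essential here.
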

		\begin{proof}
			Note that for $t \in T_V$, we have $t\mid_{U_0} = \{s^{\alpha}_{v_i,0}, \bar{s}^{\beta}_{v_{i+1},0}\}$, for some $\alpha,\beta \in V_G$ and $v_i, v_{i+1} \in V_H$. 
			Hence, it follows that $|t\mid_{U_0}| = N$. Since $|U_0| = \ell N$ and $T_V$ is a packing, we have $|T_V| \le \ell$. 
			For bounding $|T_E|$, consider  $t\in T_E$, and note that $t \mid_{U_1} = \{\bar{s}^\alpha_{v_i,v_j},\bar{s}^{\beta}_{v_j,v_i}\}$, for some $(\alpha,\beta) \in E_G$ and $(v_i,v_j) \in E_H$. 
			But also note that we have $\bar{s}^\alpha_{v_i,v_j} \in \bar{\aset}_{v_i,v_j}$ and $\bar{s}^\beta_{v_j,v_i} \in \bar{\aset}_{v_j,v_i}$.
			Hence, by the virtue of $T_E$ being a packing and using the facts that $U_1$ is the union of universes of $2k$ many base ISS $\{\bar{\aset}_{v,w}\}_{v \in V_H, w \in N(v)}$, and each $t \in T_E$ contains sets from two of such ISS, it follows $|T_E| \le k$.
			Finally, $|T| = r = \ell + k$ implies $|T_V| = \ell$ and $|T_E| = k$.   \qed
		\end{proof}
        For $i \in [\ell]$, as $\aset_{v_i,0}=(U_{v_i,0}, S_{v_i,0})$ is an ISS, we can relabel the sets in $T_V$ as $T_V = \{T^1_{V},\cdots, T^\ell_{V}\}$, where $T^i_V := t \in T_V$ such that $t\mid_{U_0} \ni s^\alpha_{v_i,0}$, for some $s^\alpha_{v_i,0} \in S_{v_i,0}$. 
  The following lemma is our key ingredient.
		
		\begin{lemma} \label{cl:tcovu}
			$T$ covers the whole universe $\uset_p$. 
		\end{lemma}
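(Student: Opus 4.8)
The plan is a direct double-counting argument, in the same spirit as the Proposition in Section~\ref{ss:tech}. Since $T$ is an $r$-packing, its member sets are pairwise disjoint, so
$\left|\bigcup_{t \in T} t\right| = \sum_{t \in T} |t|$, and since $\bigcup_{t\in T} t \subseteq \uset_p$, it suffices to show that $\sum_{t\in T}|t|$ already equals $|\uset_p| = (2k+\ell)N$; equality of cardinalities then forces $\bigcup_{t\in T} t = \uset_p$.

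To evaluate $\sum_{t\in T}|t|$, I would split it as $\sum_{t\in T_V}|t| + \sum_{t\in T_E}|t|$ and invoke Lemma~\ref{lm:setsizes}, which gives $|T_V| = \ell$ and $|T_E| = k$. Every set produced by Lemma~\ref{lm:spset} (in either $\aset$ or $\bar\aset$) has exactly $N/2$ elements, so each $V$-set $T^i_V = S_{\alpha \mapsto v_i,\beta}$ contributes $|s^{\alpha}_{v_i,0}| + |S^{\alpha}_{v_i}| + |\bar{s}^{\beta}_{v_{i+1},0}| = \tfrac{N}{2} + d(v_i)\tfrac{N}{2} + \tfrac{N}{2} = (d(v_i)+2)\tfrac{N}{2}$ elements, where $d(v_i)$ is the degree of $v_i$ in $H$ and indices are cyclic. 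Summing over $i\in[\ell]$ and using the handshake identity $\sum_{i\in[\ell]} d(v_i) = 2|E_H| = 2k$ yields $\sum_{t\in T_V}|t| = \tfrac{N}{2}(2k + 2\ell) = (k+\ell)N$. Likewise each $E$-set $S_{(\alpha,\beta)\mapsto(v_i,v_j)} = \bar{s}^{\alpha}_{v_i,v_j}\cup\bar{s}^{\beta}_{v_j,v_i}$ is a disjoint union of two $N/2$-sets, hence has size $N$, so $\sum_{t\in T_E}|t| = kN$. Adding the contributions gives $\sum_{t\in T}|t| = (k+\ell)N + kN = (2k+\ell)N = |\uset_p|$, which finishes the proof.

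The only real care needed is the bookkeeping rather than any genuine difficulty. One should check that each $T^i_V$ is indeed a $V$-set attached to vertex $v_i$ (this is exactly how $T_V = \{T^1_V,\dots,T^\ell_V\}$ was relabelled, which is legitimate because $|T_V| = \ell$ and the proof of Lemma~\ref{lm:setsizes} shows each slot $U_{v_i,0}$ receives an $s$-part), that the universes $U_{v_i,0}$ and $\{U_{v_i,w}\}_{w\in N(v_i)}$ of the various copies are pairwise disjoint (true by construction, so the three pieces of a $V$-set and the two pieces of an $E$-set are genuinely disjoint and the sizes add), and that the cyclic-successor convention is applied consistently so that the $\bar{s}$-part of the $V$-set for $v_\ell$ lands in $U_{v_1,0}$. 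The key point is that no slack is lost at any step: the totals match $|\uset_p|$ exactly, which is precisely what forces the universe to be fully covered rather than merely almost covered.
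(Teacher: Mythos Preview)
Your proof is correct and is essentially the same double-counting argument as the paper's. The only difference is organizational: the paper partitions $\uset_p = U_0 \cup U_1$ and checks $|T\mid_{U_0}| = |U_0|$ and $|T\mid_{U_1}| = |U_1|$ separately, whereas you compute the full size $(d(v_i)+2)\tfrac{N}{2}$ of each $V$-set and $N$ of each $E$-set at once and match $\sum_{t\in T}|t|$ against $(2k+\ell)N$; the underlying arithmetic and the use of Lemma~\ref{lm:setsizes} plus the handshake identity are identical.
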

			\begin{proof}
			Since $\uset_p = U_0 \cup U_1$, we will show that $T\mid_{U_j}$  covers $U_j$, for $j=\{0,1\}$. For $U_0$, note that $T\mid_{U_0} = T_V\mid_{U_0}$ by construction. For $T^i_V\in T_V$, we have $|T^i_V\mid_{U_0}| = N$ due to complement partition axiom of $(\aset_{v_i,0},\bar{\aset}_{v_i,0})$. Since  $T_V$ forms a packing, we have that $|\cup_{i \in [\ell]} T^i_V \mid_{U_0}| = \ell N = |U_0|$, as desired.
			Next, we have $|U_1| = 2kN$. Consider $T^i_V\in T_V$ and notice $|T^i_V\mid_{U_1}|=\frac{N}{2}d(v_i)$ since $T^i_V \mid_{U_1} = S^\alpha_{v_i}$, for some $\alpha \in V_G$. Since $T_V$ forms a packing, we have 
			$
			|\bigcup_{i \in [\ell]} T^i_V \mid_{U_1}|= \sum_{i=1}^{\ell} |T^i_V\mid_{U_1}|  = kN
			$.
			Now consider $t = S_{(\alpha,\beta) \mapsto (v_i,v_j)} \in T_E$, for some $(\alpha,\beta) \in E_G$ and $(v_i,v_j) \in E_H$. Since, $t\mid_{U_1} =  \{\bar{s}^\alpha_{v_i,v_j}, \bar{s}^{\beta}_{v_j,v_i}\}$, we have $|t\mid_{U_1}| = N$. As $T_E$ forms a packing, we have
			$
			|\bigcup_{t \in T_E} t \mid_{U_1}| = \sum_{t \in T_E} |t\mid_{U_1}| =  kN
			$.
			Finally,  $T$ being  a packing, we have
			$
			|\bigcup_{\tau \in T} \tau  \mid_{U_1}| = |\bigcup_{i \in [\ell]} T^i_V \mid_{U_1}| + |\bigcup_{t \in T_E} t \mid_{U_1}| = 2kN = |U_1|
			$
			as desired.
			\qed
		\end{proof}
		Let $\alpha_i = \alpha \in V_G$ such that $T^i_V\mid _{U_{v_i,0}} \ni s^{\alpha}_{v_i,0}$, for $i \in [\ell]$. Let $V_T = \{\alpha_i\}_{i \in [\ell]} $.
        The following lemma asserts that $|V_T|=\ell$. 
		\begin{lemma}\label{lm:ethuvertex}
			For each vertex $\alpha \in V_G$, there is at most one $V$-set $S_{\alpha \mapsto v_i, \beta}$ in $T_V$, for some $v_i \in V_H$ and $\beta \in V_G$.
		\end{lemma}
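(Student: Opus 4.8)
The plan is to show that, inside the ``$0$-universes'' $U_0$, the $V$-sets of $T$ are forced into a cyclic chain, and then to read off the claim from the ordering restrictions that the construction imposes on $V$-sets.

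First I would localise to a single universe $U_{v_i,0}$. By Lemma~\ref{cl:tcovu}, $T$ covers $\uset_p$; since the $E$-sets and the components $S^\alpha_{v_j}$ of the $V$-sets are disjoint from $U_0=\bigcup_{j\in[\ell]}U_{v_j,0}$, the universe $U_{v_i,0}$ is covered entirely by restrictions of $V$-sets. A $V$-set $S_{\alpha\mapsto v_j,\beta}$ meets $U_{v_i,0}$ in the ``positive'' copy $s^\alpha_{v_i,0}$ when $j=i$, in the ``negative'' copy $\bar s^\beta_{v_i,0}$ when $j=i-1$ (cyclically), and in $\emptyset$ otherwise; by Lemma~\ref{lm:spset} both nonempty possibilities have size exactly $N/2$. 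As $T_V$ is a packing and $|U_{v_i,0}|=N$, exactly two $V$-sets of $T_V$ meet $U_{v_i,0}$, and their restrictions partition it.

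The key step is to exclude the possibility that these two pieces are both positive or both negative. Two positive copies would be two distinct sets of the base ISS $S_A$, which intersect because $\aset$ is an ISS, contradicting disjointness; two negative copies would be two distinct sets of $\cmp(S_A)=S_B$, which is also an ISS, again a contradiction. Hence $U_{v_i,0}$ is covered by one positive copy $s^{\alpha_i}_{v_i,0}$ and one negative copy $\bar s^{\beta}_{v_i,0}$. In particular there is a \emph{unique} $V$-set of $T_V$ with middle index $v_i$ (this is $T^i_V$, with source $\alpha_i$, which also justifies the definition of $T^i_V$), and since the two halves partition $U_{v_i,0}$ we get $\bar s^\beta=\bar s^{\alpha_i}$, i.e. $\beta=\alpha_i$. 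The negative half is supplied by the unique $V$-set of $T_V$ with middle index $v_{i-1}$, namely $T^{i-1}_V$, so $T^{i-1}_V=S_{\alpha_{i-1}\mapsto v_{i-1},\alpha_i}$. Letting $i$ range over $[\ell]$ cyclically yields $T^i_V=S_{\alpha_i\mapsto v_i,\alpha_{i+1}}$ for all $i$.

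Finally I would appeal to the construction of the $V$-sets: $S_{\alpha_i\mapsto v_i,\alpha_{i+1}}$ was placed in $\ssset_p$ only if $\alpha_{i+1}>\alpha_i$ for $i<\ell$, and $S_{\alpha_\ell\mapsto v_\ell,\alpha_1}$ only if $\alpha_1<\alpha_\ell$. Together these force $\alpha_1<\alpha_2<\cdots<\alpha_\ell$, so the sources $\alpha_1,\dots,\alpha_\ell$ are pairwise distinct. Since any $V$-set $S_{\alpha\mapsto v_i,\beta}\in T_V$ has middle index $v_i$ and hence equals $T^i_V$, which forces $\alpha=\alpha_i$, no vertex $\alpha\in V_G$ can be the source of two $V$-sets in $T_V$. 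I expect the main obstacle to be the bookkeeping in the counting step: correctly handling the cyclic wrap-around at $i=1$ and $i=\ell$, and verifying that genuinely nothing other than the two anticipated $V$-sets touches $U_{v_i,0}$.
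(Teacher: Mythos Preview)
Your proof is correct and follows essentially the same route as the paper: localise to $U_{v_i,0}$, use Lemma~\ref{cl:tcovu} together with the ISS structure of $\aset_{v_i,0}$ and $\bar\aset_{v_i,0}$ to force the third coordinate of $T^{i-1}_V$ to equal $\alpha_i$, and then read off $\alpha_1<\cdots<\alpha_\ell$ from the ordering constraints built into the $V$-sets. The only difference is granularity: the paper invokes the compatible-ISS-pair property in one line to match $\bar s^\beta_{v_{i+1},0}$ with $s^\beta_{v_{i+1},0}$, while you unpack that step by counting (exactly two $N/2$-pieces cover $U_{v_i,0}$) and ruling out two-positive and two-negative via the ISS property; your version has the side benefit of making the well-definedness of $T^i_V$ explicit.
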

		\begin{proof}
		    It is sufficient to show  $\alpha_{i} < \alpha_{i+1}$, for $i \in [\ell-1]$. Fix such $i$ and consider
			the universe $U_{v_i,0}$ of  $\aset_{v_{i+1},0}$. Then, note that only $T^i_V$ and $T^{i+1}_V$ contain elements of $U_{v_i,0}$. Let $T^i_V = S_{\alpha_i \mapsto v_i, \beta}$ for $\beta > \alpha_i$, and let $T^{i+1}_V =  S_{\alpha_{i+1} \mapsto v_{i+1}, \gamma}$, for $\gamma > \alpha_{i+1}$. As $T$ covers $U_{v_i,0}$ (Lemma~\ref{cl:tcovu}),  and using the complement partition property of the compatible ISS pair  $(\aset_{v_{i+1},0},\bar{\aset}_{v_{i+1},0})$, we have that $\alpha_{i+1} = \beta > \alpha_i$.\qed
		\end{proof}
		\begin{lemma} \label{lm:ethuedge}
			For every edge $(\alpha,\beta) \in E_G$, there is at most one $E$-set $S_{(\alpha,\beta) \mapsto (v_i,v_j)}$ in $T_E$, for some $(v_i,v_j) \in E_H$.
		\end{lemma}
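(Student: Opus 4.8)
The plan is to mirror the argument of Lemma~\ref{lm:ethuvertex}, but now working inside the universe block $U_1$ rather than $U_0$, and exploiting the \emph{complement exchange} property of the compatible ISS pair $(\aset,\bar\aset)$ rather than just complement disjointness. First I would fix an edge $(\alpha,\beta)\in E_G$ and suppose, for contradiction, that two distinct $E$-sets of $T_E$ both ``use'' this edge, say $S_{(\alpha,\beta)\mapsto(v_i,v_j)}$ and $S_{(\alpha,\beta)\mapsto(v_{i'},v_{j'})}$ with $(v_i,v_j)\ne(v_{i'},v_{j'})$ (up to the orientation convention). The key observation is that an $E$-set restricted to a block $U_{v_i,v_j}$ contributes exactly the complement set $\bar s^\alpha_{v_i,v_j}$, which has size $N/2$, so two $E$-sets sharing a block $U_{v_i,v_j}$ would each occupy $N/2$ of its $N$ elements — that alone is not yet a contradiction, so I need to bring in the $V$-sets.

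The crucial coupling comes from Lemma~\ref{cl:tcovu}: $T$ covers all of $\uset_p$, and in particular covers each block $U_{v_i,v_j}$ with $v_j\in N(v_i)$. Inside such a block, $T_V$ contributes $T^i_V\mid_{U_{v_i,v_j}} = s^{\alpha_i}_{v_i,v_j}$ (a set of size $N/2$ coming from $S^{\alpha_i}_{v_i}$, where $\alpha_i$ is the vertex label read off $T^i_V$ in the $U_0$-block, as in the $V$-set analysis), and since $T$ is a packing, the only way to cover the remaining $N/2$ elements of $U_{v_i,v_j}$ is with a single $E$-set contributing exactly $\bar s^{\alpha_i}_{v_i,v_j}$. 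Hence the $E$-set occupying block $U_{v_i,v_j}$ is forced to be $S_{(\alpha_i,\alpha_j)\mapsto(v_i,v_j)}$ for the specific pair $(\alpha_i,\alpha_j)$ determined by the $V$-sets $T^i_V,T^j_V$ — there is at most one $E$-set per edge $(v_i,v_j)$ of $H$, and its $G$-endpoints are pinned down. Combined with Lemma~\ref{lm:setsizes} ($|T_E|=k=|E_H|$) and Lemma~\ref{lm:ethuvertex} (the $\alpha_i$ are distinct), this shows that the map $(v_i,v_j)\mapsto(\alpha_i,\alpha_j)$ is injective on $E_H$: if two distinct edges $(v_i,v_j)$ and $(v_{i'},v_{j'})$ of $H$ mapped to the same edge $(\alpha,\beta)$ of $G$, then since $i\mapsto\alpha_i$ is injective we would need $\{v_i,v_j\}=\{v_{i'},v_{j'}\}$, a contradiction.

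I would therefore structure the proof as: (1) recall from the packing/covering analysis in block $U_{v_i,v_j}$ that $T_V$ forces $s^{\alpha_i}_{v_i,v_j}$ there and hence any $E$-set touching that block must equal $\bar s^{\alpha_i}_{v_i,v_j}$ on it — invoking complement disjointness and the fact that $|U_{v_i,v_j}|=N$ with both halves of size $N/2$; (2) conclude that each edge $(v_i,v_j)\in E_H$ has a unique associated $E$-set in $T_E$ whose label is $S_{(\alpha_i,\alpha_j)\mapsto(v_i,v_j)}$; (3) use injectivity of $i\mapsto\alpha_i$ (Lemma~\ref{lm:ethuvertex}) to deduce that distinct edges of $H$ receive distinct $G$-edges, which is exactly the claim $\le 1$ $E$-set per edge $(\alpha,\beta)\in E_G$. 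The main obstacle I anticipate is the bookkeeping with the cyclic index convention and the asymmetry of the $E$-set definition $S_{(\alpha,\beta)\mapsto(v_i,v_j)}=\bar s^\alpha_{v_i,v_j}\cup\bar s^\beta_{v_j,v_i}$ under swapping $(v_i,v_j)$; one must be careful that an $E$-set occupies \emph{two} blocks $U_{v_i,v_j}$ and $U_{v_j,v_i}$ and that the consistency forced in \emph{both} blocks is what actually identifies its two $G$-endpoints, so the covering argument has to be applied to both blocks simultaneously before concluding.
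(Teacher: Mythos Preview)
Your argument is correct but runs in the opposite direction from the paper's. The paper argues by contradiction \emph{starting from the $E$-sets}: it supposes $S_{(\alpha,\beta)\mapsto(v_i,v_j)}$ and $S_{(\alpha,\beta)\mapsto(v'_i,v'_j)}$ both lie in $T_E$, notes that this pins $T_E\mid_{U_{v_i,v_j}}=\bar s^{\alpha}_{v_i,v_j}$ and $T_E\mid_{U_{v'_i,v'_j}}=\bar s^{\alpha}_{v'_i,v'_j}$, and then uses Lemma~\ref{cl:tcovu} together with the compatible ISS pair to force $T_V\mid_{U_{v_i,v_j}}=s^{\alpha}_{v_i,v_j}$ and $T_V\mid_{U_{v'_i,v'_j}}=s^{\alpha}_{v'_i,v'_j}$; this produces two $V$-sets in $T_V$ both labelled by the same vertex $\alpha$, contradicting Lemma~\ref{lm:ethuvertex} directly. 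You instead \emph{start from the $V$-sets}: knowing already that $T^i_V\mid_{U_{v_i,v_j}}=s^{\alpha_i}_{v_i,v_j}$, you use covering and packing to pin down the unique $E$-set in each block as $S_{(\alpha_i,\alpha_j)\mapsto(v_i,v_j)}$, and only then invoke the injectivity of $i\mapsto\alpha_i$ from Lemma~\ref{lm:ethuvertex} to rule out a repeated $G$-edge. Both routes use exactly the same ingredients; the paper's backward ($E\to V$) direction is slightly leaner for this lemma in isolation, whereas your forward ($V\to E$) direction has the side benefit of essentially proving Lemma~\ref{cl:ethiso} along the way, so in your write-up the two lemmas could naturally be merged into one.
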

  		\begin{proof}
			Suppose there are two sets $S_{(\alpha,\beta) \mapsto (v_i,v_j)}, S_{(\alpha,\beta) \mapsto (v'_i,v'_j)} \in T_E$, for some $(\alpha,\beta) \in E_G$. Without loss of generality assume $v_i \ne v'_i$. Then, we will show that  $S_{\alpha \mapsto v_i, \gamma}, S_{\alpha \mapsto v'_i, \delta} \in T_V$, for some $\gamma,\delta \in V_G$, contradicting Lemma~\ref{lm:ethuvertex}. Since $S_{(\alpha,\beta) \mapsto (v_i,v_j)}, S_{(\alpha,\beta) \mapsto (v'_i,v'_j)} \in T_E$, it holds that $T_E\mid_{U_{v_i,v_j}} = \bar{s}^\alpha_{v_i,v_j}$, and $T_E\mid_{U_{v'_i,v'_j}}=\bar{s}^\alpha_{v'_i,v'_j}$.  As $T$ covers $\uset_p$, in particular, $T$ covers $U_{v_i,v_j}$, it must be that $T_V\mid_{U_{v_i,v_j}} =  s^\alpha_{v_i,v_j}$ as $(\aset_{v_i,v_j}, \bar{\aset}_{v_i,v_j})$ is a compatible ISS pair. By similar reasoning for $U_{v'_i,v'_j}$, it must be that $T_V\mid_{U_{v'_i,v'_j}} =  s^\alpha_{v'_i,v'_j}$.
			This implies that $T_V \mid_{U_{v_i}} = S^\alpha_{v_i}$ and  $T_V \mid_{U_{v'_i}} = S^\alpha_{v'_i}$. Thus, $S_{\alpha \mapsto v_i,\gamma}, S_{\alpha \mapsto v'_i,\delta} \in T_V$ for $v_i \ne v_i'$, for some $\gamma,\delta \in V_G$.\qed
		\end{proof}
Let $G_T = G[V_T] =(V_T,E_T)$, be the induced subgraph of $G$ on $V_T$.  
To finish the proof, we claim  that $G_T$ is isomorphic to $H$ with the injective homomorphism $\phi : V_H \rightarrow V_{T}$ given by $\phi(v_i) = \alpha_{i}$, for $i \in [\ell]$.
To this end, we will show that for any $(v_i,v_j) \in E_H$, it holds that $(\phi(v_i),\phi(v_j))= (\alpha_i,\alpha_j) \in E_T$. 
Consider the universe $U_{v_i,v_j}$, and note that $T^i_V \mid_{U_{v_i,v_j}}= s^{\alpha_i}_{v_i,v_j}$.
As $T$ covers $U_{v_i,v_j}$, it holds that $T_E \mid_{U_{v_i,v_j}} = \bar{s}^{\alpha_i}_{v_i,v_j}$ since $(\aset_{v_i,v_j}, \bar{\aset}_{v_i,v_j})$ is a compatible ISS pair. Hence $S_{(\alpha_i,\beta) \mapsto (v_i,v_j)} \in T_E$, for some $(\alpha_i,\beta) \in E_G$. This implies that $T_E \mid_{U_{v_j,v_i}} = \bar{s}^{\beta}_{v_j,v_i}$. By similar arguments for $U_{v_j,v_i}$, we have that $\beta = \alpha_j$ as $T^j_v \mid_{U_{v_j,v_i}}= s^{\alpha_j}_{v_j,v_i}$. Hence $(\alpha_i,\alpha_j)= (\phi(v_i),\phi(v_j)) \in E_G$.
\qed
	\end{proof}
 	\textbf{Acknowledgments.} 
    This work has been partially supported by European Research Council (ERC) under the European Union’s Horizon 2020 research and innovation programme (grant agreement No. 759557).
  I thank Parinya Chalermsook for the informative discussions about the results of the paper, and for providing guidance on writing this paper. I also thank anonymous reviewers for their valuable suggestions on improving the readability of the paper.
	
	

    	\appendix

	\section{A simple algorithm for \psp}\label{ss:dialgo}
		In this section we show an algorithm for \spp that runs in  time $O^*(2^{|\uset|})$.
		The main idea is to exploit the fact that $|\uset|$ is small and convert the problem to the path finding problem in directed acyclic graphs (DAG). To this end, we enumerate all the subsets of $\uset$ by creating a vertex for each subset. Then, we add a directed edge from subset $T_1$ to subset $T_2$ if there is a set $S_i \in \ssset$ such that $T_1 \cap S_i = \emptyset$ and $T_1 \cup S_i = T_2$. Intuitively, the edge $(T_1,T_2)$ in the DAG captures the fact that, if the union of our present solution is $T_1$, then we can improve it by including $S_i$ to get a solution whose union is $T_2$. Thus finding a maximum sized packing reduces to finding a longest path in the DAG which can be found efficiently by standard dynamic programming technique.

	\begin{theorem} \label{thm:algosp}
		There is an algorithm for \spp  running in time $O^*(2^{|\uset|})$.
	\end{theorem}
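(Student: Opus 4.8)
The statement to prove is Theorem~\ref{thm:algosp}: \spp admits an algorithm running in time $O^*(2^{|\uset|})$. The idea, already sketched in the preceding text, is a dynamic program over subsets of the universe, phrased as a longest-path computation in a DAG. First I would build the vertex set $V := 2^{\uset}$, one node per subset $T \subseteq \uset$. For the edge set, for each ordered pair of nodes $(T_1, T_2)$ I check whether there exists $S_i \in \ssset$ with $S_i \cap T_1 = \emptyset$ and $T_1 \cup S_i = T_2$; equivalently, $T_2 \supseteq T_1$ and $T_2 \setminus T_1 \in \ssset$. If so, add the directed edge $(T_1, T_2)$. This graph is acyclic because every edge strictly increases cardinality (the added set $S_i$ is nonempty and disjoint from $T_1$), so we may topologically sort nodes by $|T|$. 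The packing number we seek equals the length (number of edges) of the longest directed path starting at the node $\emptyset$, since any packing $S_{i_1}, \dots, S_{i_t}$ gives the path $\emptyset \to S_{i_1} \to S_{i_1}\cup S_{i_2} \to \cdots$, and conversely any path from $\emptyset$ decomposes each successive edge-label into a set of $\ssset$, all of which are pairwise disjoint by the disjointness condition on edges.

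**Key steps, in order.** (1) State the DAG construction precisely and verify acyclicity via the strict increase of $|T|$ along edges. (2) Establish the correspondence between packings in $(\uset,\ssset)$ and directed paths out of $\emptyset$: a length-$t$ path yields $t$ pairwise-disjoint sets, and a packing of size $t$ yields a length-$t$ path; conclude the maximum packing size equals the longest such path length. (3) Compute the longest path from $\emptyset$ by the standard DP: process nodes in order of increasing $|T|$, set $\mathrm{dp}[\emptyset] = 0$, and $\mathrm{dp}[T_2] = \max\{\mathrm{dp}[T_1] + 1 : (T_1,T_2)\text{ an edge}\}$, then output $\max_T \mathrm{dp}[T]$ (together with a back-pointer trace if the actual packing is wanted). (4) Bound the running time: there are $2^{|\uset|}$ nodes, building the adjacency can be done by iterating over each node $T_1$ and each $S_i \in \ssset$ disjoint from $T_1$ (so $O(2^{|\uset|} \cdot |\ssset|)$ edge insertions), the DP is linear in the number of nodes plus edges, and all set operations on subsets of $\uset$ cost $\poly(|\uset|)$; the total is $2^{|\uset|} \cdot \poly(|\uset|,|\ssset|) = O^*(2^{|\uset|})$.

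**Main obstacle.** There is no deep obstacle here; the content is entirely in the correctness correspondence of step (2), and the mild subtlety is making sure the reduction is faithful in both directions — in particular that duplicated sets or the empty set in $\ssset$ do not cause trouble. If $\emptyset \in \ssset$ one should either discard it at the outset (it can be in every packing trivially, or one adopts the convention that packings consist of nonempty sets) or note that self-loops it would create are harmless since we only ever look at simple paths for the longest-path count; discarding it up front is cleanest and keeps acyclicity exact. I would also remark that one can skip materializing the DAG explicitly and run the DP directly over the subset lattice, which is the form in which this result is usually quoted~\cite{10.1137/070683933,10.5555/2815661}; the DAG phrasing is just for exposition.
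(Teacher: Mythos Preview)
Your proposal is correct and follows essentially the same approach as the paper: build the DAG on $2^{\uset}$ with edges encoding the addition of one disjoint set from $\ssset$, argue acyclicity via strictly increasing cardinality, identify packings with paths from $\emptyset$, and compute the longest such path by dynamic programming in $O^*(2^{|\uset|})$ time. The only cosmetic difference is that the paper's DP tabulates a bit $\mathbf{A}[i,j]$ for ``there is a length-$j$ path from $\emptyset$ to node $i$'' rather than storing the maximum path length directly, and it does not explicitly discuss the $\emptyset \in \ssset$ edge case you raise.
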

	\begin{proof}
		Given an instance $(\uset,\ssset)$ of \spp, the idea is to construct a graph $G=(V,E)$ such that there is a vertex in $G$ for every subset of $\uset$. For a subset $T \subseteq \uset$, let $v_T$ be the corresponding vertex in $G$.  Then, add a labeled directed edge from vertex $v_{T_i}$ to $v_{T_j}$ with label $S_k \in \ssset$ if there is $S_k \in \ssset$ such that ${T_i} \cup S_k = {T_j}$ and ${T_i} \cap S_k = \emptyset$. In other words, adding the set $S_k$ to our present solution, whose union is denoted by $T_i$,  is safe and results into a new solution whose union is $T_j$. \ignore{Finally, label the directed edge $(v_{S_i}, v_{S_j})$ by the set $T$.} First note that $G$ can be constructed from $(\uset,\ssset)$ in time   $O(2^{|\uset|}|\ssset||\uset|)$.  Now we claim that there are $\ell$ pairwise disjoint sets in $(\uset,\ssset)$ if and only if there is a path of length $\ell$ starting at vertex $v_\emptyset$ in $G$.
		
		For one direction, suppose there is a path $P=\{v_\emptyset, v_{T_1},v_{T_2},\cdots,v_{T_\ell}\}$ of length $\ell$ in $G$ . Then, there are $\ell$ sets in $\ssset$ labeled by the edges of $P$ that are pairwise disjoint by construction. For the other direction, let $S=\{S_1,S_2,\cdots, S_\ell\}$ be  pairwise disjoint sets in $\ssset$. Now, fix some order on the sets of $S$ and  consider the following sets, for $i \in [\ell]$,	
		$
		T_i := \bigcup_{j=1}^i S_j
		$.	
		Now consider the collection of vertices $P=\{v_{T_0}, v_{T_1},\cdots,v_{T_\ell}\}$, where ${T_0} := \emptyset$. It is easy to see that $P$ is a path in $G$ since  there is an edge from $v_{T_i}$ to $v_{T_{i+1}}$ in $G$, labeled by set $S_{i+1} \in \ssset$, for every $i \in \{0,1,\cdots,\ell-1\}$. 
		
		Next we show how to find an $\ell$ length path in $G$ starting $v_\emptyset$ in time $O^*(2^{|\uset|})$. First note that $G$ is a directed acyclic graph(DAG) since every directed edge $(v_{T_i}, v_{T_j})$ in $G$ implies $|T_j| > |T_i|$. Now we  use standard dynamic program to find an $\ell$ length path at $v_\emptyset$. Rename the vertices of $G$ to $\{0,1,\cdots, N-1\}$, where $N=2^{|\uset|}$, such that the vertex $v_{T_j}$ is renamed to $i \in \{0,1,\cdots, N-1\}$ where $i$ is the number whose binary representation corresponds to the characteristic vector $\chi_{T_j}$. Next, define a two dimensional bit array $\mathbf{A}$ such that for $0\le i \le N-1$ and $1 \le j \le \ell$,
		\begin{align*}
		\mathbf{A}[i,j] := 
		\begin{cases}
		1 & \text{if there is a path from vertex $0$ to vertex $i$ of length $j$}   \\
		0 & \text{otherwise}
		\end{cases}
		\end{align*}
		Computing bottom up, we can fill $\mathbf{A}$ in time $O(\ell (N+M))$, where $M := 2^{|\uset|}|\ssset|$ is the number of edges in $G$, and then look  for entry $1$ in the array $\mathbf{A}[\cdot, \ell]$. Further, to find a path, we can modify the algorithm such that, instead of storing a bit, $\mathbf{A}[i,j]$ now stores one of the paths (or just the preceding vertex of a path). Thus we can find an optimal solution of \spp in time $O^*(2^{|\uset|})$.\qed 
	\end{proof}

	\section{Hardness of \crvs} \label{sec:ckvs}
	In this section, we give a proof sketch of Theorem~\ref{thm:ckvs}. First, we define the problem.
	\begin{definition}[\rvsm]
		Given a collection $\mathcal{C} = \{\vec{\eta_1},\cdots, \vec{\eta_N}\}$ of $d$ dimensional vectors over $\mathbb{F}_2$, a vector $\vec{b} \in \mathbb{F}_2^d$, and an integer $k$, the \rvsm problem asks if there is $I \subseteq  [N], |I|= k$ such that $\sum_{{j} \in I} \vec{\eta_{j}}=\vec{b}$, where the sum is over $\mathbb{F}_2^d$.
		An instance of $\rvsm$ is denoted as $(\cset,\vec{b},d,r)$.	
	\end{definition}
    \crvs is defined when $d = f(r) \cdot \Theta(\poly(\log N))$, for some $f(r)\ge r$.	
	The hardness  of \crvs follows from the following theorem.
	\begin{theorem} \label{thm:sgired_rvsm}
		There is an \fpt-reduction that, for every instance $\iset = (G=(V_G,E_G),H=(V_H,E_H),k)$ of \sgi with $|V_G|=n$ and $|E_G|=m$, computes $\mu = O(k!)$ instances $\lset_p=(\cset_p,\vec{b},d,r), p \in [\mu]$, of  \rvsm with the following properties:
		\begin{itemize}
			\item [$\circ$] $d= \Theta(k \log n)$
			\item[$\circ$] $|\cset_p|= \Theta(n^2k + mk)$
			\item[$\circ$] $r=\Theta(k)$
		\end{itemize}
		such that there is a subgraph of $G$ isomorphic to $H$ if and only if there exists $p \in [\mu]$ such that there are at most $r$ vectors in the instance $\jset_p$ that sum to $\vec{b}$.
	\end{theorem}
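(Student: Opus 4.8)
The plan is to re-use, essentially verbatim, the set system built for \cecp in Theorem~\ref{thm:sgired_ecp} and read it as an $\mathbb{F}_2$-linear instance. For each ordering $p$ and each set $S\in\ssset_p$ take its characteristic vector $\vec\eta_S:=\chi_S\in\mathbb{F}_2^{|\uset_p|}$, let $\cset_p:=\{\vec\eta_S:S\in\ssset_p\}$, put $d:=|\uset_p|$, keep $r:=k+\ell$, and set the target $\vec b:=\vec 1$, the all-ones vector. The size bounds $d=(2k+\ell)N=\Theta(k\log n)$, $|\cset_p|=|\ssset_p|=\Theta(n^2k+mk)$ and $r=\Theta(k)$ are inherited immediately, so $d=r\cdot\Theta(\log|\cset_p|)$ and each $\lset_p$ is a compact \rvsm instance. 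There is one small twist in the construction: I would choose the ISS dimension $N$ to be the least integer with $N\equiv 2\pmod 4$ and $2^{N/2-1}\ge n+1$ (still $\Theta(\log n)$, still $M\ge n+1$), so that $N/2$ is \emph{odd}; this is the only change to the reduction of Theorem~\ref{thm:sgired} and it serves purely as a parity gadget in the soundness analysis.

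The \textbf{Yes} direction is immediate. By the Yes-case of Theorem~\ref{thm:sgired_ecp}, an isomorphic copy of $H$ in $G$ yields, for the appropriate $p$, a sub-collection $T\subseteq\ssset_p$ of size exactly $r$ that \emph{partitions} $\uset_p$; the corresponding characteristic vectors therefore sum over $\mathbb{F}_2$ to $\vec 1=\vec b$, so $\lset_p$ is a Yes-instance with $|I|=r$.

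The work is the \textbf{No} direction: given $p\in[\mu]$ and $I\subseteq\cset_p$ with $|I|\le r$ and $\sum_{\vec\eta\in I}\vec\eta=\vec 1$, I must recover an isomorphic copy of $H$. The strategy is to show $I$ is in fact an exact $r$-cover of $\uset_p$, so that the No-case analysis of Theorem~\ref{thm:sgired_ecp} (Lemmas~\ref{lm:ethuvertex},~\ref{lm:ethuedge},~\ref{cl:ethiso}) applies unchanged. Decompose $\uset_p$ into the $\ell+2k$ blocks $\{U_{v_i,0}\}_i$ and $\{U_{v_i,w}\}_{i,\,w\in N(v_i)}$, each of size $N$; every set of $\ssset_p$ meets each block it touches in a ``half-set'' of size exactly $N/2$ that is either a member of $S_A$ or the complement of one, by Lemma~\ref{lm:spset}. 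Now the parity gadget bites: if a block receives $j$ half-sets whose $\mathbb{F}_2$-sum is the all-ones vector on that block, then $j\cdot(N/2)\equiv N\equiv 0\pmod 2$, and since $N/2$ is odd, $j$ is even; since the sum is $\vec 1$ each coordinate is covered an odd (hence $\ge 1$) number of times, so $j\ge 2$ for every block. Counting block-incidences, a $V$-set at $v_i$ touches $2+d(v_i)$ blocks and an $E$-set touches $2$ blocks, so writing $t_V,t_E$ for the numbers of $V$- and $E$-sets in $I$ we get $2t_V+\sum_{V\text{-sets}}d(v_i)+2t_E\ge 2(\ell+2k)$; restricting the count to the $\ell$ zero-blocks (touched only by $V$-sets, two apiece) already forces $t_V\ge\ell$, and restricting it to the $2k$ edge-blocks gives $\sum_{V\text{-sets}}d(v_i)+2t_E\ge 4k$.

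I expect the remaining step to be the main obstacle: squeezing ``$t_V=\ell$, $t_E=k$, every vertex of $H$ used once, and every block doubly covered'' out of the budget $r=k+\ell$. A priori a solution could waste $V$-budget (some vertex carrying $\ge 2$ of the $V$-sets) and make up for it with extra $E$-sets; one has to argue block-locally, in the spirit of Lemmas~\ref{lm:ethuvertex}--\ref{cl:ethiso}, that leaving a vertex $v_i$ uncovered by $V$-sets forces (by the ``$j$ even and $\ge 2$'' fact) at least two $E$-sets on each block $U_{v_j,v_i}$ with $v_j\in N(v_i)$, which overspends the $E$-budget, and similarly that two $V$-sets at the same vertex are unaffordable; this is exactly the place where the parity choice $N\equiv 2\pmod 4$ is needed. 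Once $I$ is forced to put exactly two half-sets in every block, the equation ``sum $=\vec 1$ on a block'' forces those two half-sets to be complementary, and since $S_A$ is complement-free one of them lies in $S_A$ and the other in its complement — precisely the configuration produced by a disjoint covering — so $I$ partitions $\uset_p$ and Theorem~\ref{thm:sgired_ecp}'s No-case yields the isomorphic subgraph. Finally, Theorem~\ref{thm:ckvs} follows by composing this reduction with the ETH-hardness of \sgi exactly as in the proof of Theorem~\ref{thm:enuhard}, since $r=\Theta(k)$ and $d=\Theta(k\log n)=r\cdot\Theta(\log|\cset_p|)$.
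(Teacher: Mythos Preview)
Your route is genuinely different from the paper's, and the step you yourself flag as ``the main obstacle'' is a real gap that your sketch does not close.

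The paper does \emph{not} take the characteristic vectors as they are. It \emph{appends} $\ell+2k$ extra ``indicator'' coordinates: the vector of a $V$-set $S_{\alpha\mapsto v_i,\beta}$ gets a single $1$ in the $i$-th of $\ell$ $V$-indicator positions, and the vector of an $E$-set $S_{(\alpha,\beta)\mapsto(v_i,v_j)}$ gets a $1$ in each of the two $E$-indicator positions attached to the directed pair $(v_i,v_j)$ and $(v_j,v_i)$; the target is still $\vec 1$. Because $\vec b=\vec 1$ on these new coordinates, one reads off immediately that $|W_V|\ge \ell$ (each $V$-indicator coordinate must be hit, and each $V$-vector hits exactly one) and $|W_E|\ge k$ (there are $2k$ $E$-indicator coordinates and each $E$-vector hits two). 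With $|W|\le r=\ell+k$ this forces $|W_V|=\ell$, $|W_E|=k$, one $V$-vector per $v_i$ and one $E$-vector per edge of $H$; a straightforward element count on $U_0$ and $U_1$ then shows the underlying sets actually partition $\uset_p$, so the No-case of Theorem~\ref{thm:sgired_ecp} applies verbatim. The extra $\ell+2k=O(k)$ coordinates do not disturb the bound $d=\Theta(k\log n)$.

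Your parity trick ($N\equiv 2\pmod 4$) only yields that on every block the number $j$ of contributing half-sets is even and $\ge 2$; it does \emph{not} force $j=2$. Writing $a_i$ for the number of $V$-sets at $v_i$ and $b_e$ for the number of $E$-sets at $e$, all you get is that every $a_i+a_{i-1}$ and every $a_i+b_e$ is even and $\ge 2$, hence all $a_i,b_e$ share a common parity. In the all-odd case your counting does pin down $a_i=b_e=1$ and the rest goes through; the problem is the all-even case. For dense $H$ the budget does \emph{not} exclude it by counting alone: for instance if $H=K_n$ then taking every $a_i=4$ and every $b_e=0$ costs $4n$, which is at most $r=n+\binom{n}{2}$ once $n\ge 7$. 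Whether such an all-even profile can actually realise $\sum=\vec 1$ on every block then hinges on whether four members of $S_A$ (indexed by $V_G$) can sum to $\vec 1$ over $\mathbb{F}_2$; this depends on which representative of each complementary pair the greedy construction of Lemma~\ref{lm:spset} happened to pick, and you give no argument about it. Your sketch (``leaving a vertex uncovered overspends the $E$-budget'') addresses a different deviation and does not rule this case out. The paper's indicator coordinates are precisely the device that makes this whole case analysis unnecessary.
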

	\paragraph*{Proof Sketch.} 
	The first part of the reduction is, in fact, same as that described in Theorem~\ref{thm:sgired}, with a simple observation that any optimal packing in the instance generated by Theorem~\ref{thm:sgired} is also a covering. This holds true in No case due to Lemma~\ref{cl:tcovu}, and it holds true in Yes case due to the complement exchange property of the compatible ISS-pair gadget used in the construction.
 We call such solution as \textit{exact cover}. In the second part, we transform this instance of Theorem~\ref{thm:sgired} to an instance of \rvsm. The following definition is useful for the transformation.
	\begin{definition}[Characteristic vector]
		Let $U$ be a universe of $q$ elements. Fix an order on the elements of $U = (e_1,\cdots,e_q)$. For any set $S \subseteq U$, define the characteristic vector $\vec{\chi}_S \in \mathbb{F}_2^q$ of $S$ as follows. The $t^{th}$ co-ordinate of $\vec{\chi}_S$,
		\[
		\vec{\chi}_S(t) :=
		\begin{cases}
		1 & \text{ if } e_t \in S,\\
		0 & \text{ if } e_t \notin S.
		\end{cases}
		\]
	\end{definition}
	For every instance $\jset_p=(\uset_p,\ssset_p,r), p \in [\mu]$ generated by Theorem~\ref{thm:sgired}, we create an instance $\lset_p=(\cset_p,\vec{b},d,r)$ of \rvsm as follows. Rename the vertices of $V_H$ as $\{v_1,\cdots,v_\ell \}$ such that $v_i := v \in V_H$ such that $p(v)=i$. Note that this induces an ordering on $V_H$ as $v_1 < \cdots < v_\ell$. Thus, for $v_i \in V_H$, we have an ordering on $N(v_i) = \{v'_1,\cdots,v'_{d(v_i)}\}$ as $v'_1 < \cdots < v'_{d(v_i)}$. Hence, for $\lambda \in [d(v_i)]$, we call $v'_\gamma$ as the $\lambda^{th}$ neighbour of $v_i$. Now, for $v_i \in V_H$, we define $\Gamma_i : N(v_i) \mapsto [d(v_i)]$ as $\Gamma_i(v_j) := \lambda \in [d(v_i)]$, such that $v_j$ is the $\lambda^{th}$ neighbour of $v_i$. Next, we construct vectors corresponding to the sets in $\ssset_p$.
	For every $V$-set $S_{\alpha \mapsto v_i}$, for  $v_i \in V_H$ and $\alpha \in V_G$, of $\ssset_p$, define $|\uset_p| + \ell + 2k$ length vector $\vec{\chi}'_{S_{\alpha \mapsto v_i}}$ as follows.
	\[
	\vec{\chi}'_{S_{\alpha \mapsto v_i}}(t)  :=
	\begin{cases}
	\vec{\chi}_{S_{\alpha \mapsto v_i}}(t) & \text{ if } t \in [|\uset_p|],\\
	1 & \text{ if } t = |\uset_p| + i, \\
	0 & \text{ otherwise.}
	\end{cases}
	\]
	Similarly, for every $E$-set $S_{(\alpha,\beta) \mapsto (v_i,v_j)}$, for  $(\alpha,\beta) \in E_G$ and $(v_i,v_j) \in E_H$, define  $|\uset_p| + \ell + 2k$ length vector $\vec{\chi}'_{S_{(\alpha,\beta) \mapsto (v_i,v_j)}}$ as follows.
	\[
	\vec{\chi}'_{S_{(\alpha,\beta) \mapsto (v_i,v_j)}}(t)  :=
	\begin{cases}
	\vec{\chi}_{S_{(\alpha,\beta) \mapsto (v_i,v_j)}}(t) & \text{ if } t \in [|\uset_p|],\\
	1 & \text{ if } t = |\uset_p| + \ell + \sum_{\rho=1}^{i-1} d(v_\rho) + \Gamma_i(j), \\
	1 & \text{ if } t = |\uset_p| + \ell + \sum_{\rho=1}^{j-1} d(v_\rho) + \Gamma_j(i), \\
	0 & \text{ otherwise.}
	\end{cases}
	\]    
	Now consider the instance $\lset_p=(\cset_p,\vec{b},d,r)$, where
	\begin{itemize}
		\item $d := |\uset_p| + \ell + 2k$
		\item $\cset_p := \{\vec{\chi'}_S\}_{S \in \ssset_p}$
		\item $\vec{b} := \vec{1}$, the all ones vector.
	\end{itemize}
	Before we prove the correctness, we define some notations. The first $\ell$ bits that we appended to $\vec{\chi}_S$ are called $V$-indicator bits, and the next $2k$ bits are called $E$-indicator bits. The universe corresponding to $V$-indicator bits and $E$-indicator bits is denoted as $U_2$ and $U_3$ respectively. \\
	\textbf{\textsf{Yes Case: }} From Theorem~\ref{thm:sgired}, there is $p \in [\mu]$ such that $\jset_p=(\uset_p,\ssset_p,r)$ has a $r$-packing $\ssset'_p \subseteq \ssset_p$ that covers $\uset_p$. Then, consider the corresponding instance $\lset_p=(\cset_p,\vec{b},d,r)$ of \rvsm. Then, note that
	\[
	\sum_{S \in \ssset'_p} \vec{\chi_S} = \vec{1} = \vec{b}
	\]
	since $\ssset'_p$ is a packing covering $\uset_p$. Hence $\{\vec{\chi}_S\}_{S \in \ssset'_p} \subseteq \cset_p$ is a solution to $\lset_p$.\\
	\textbf{\textsf{No Case: }} Let $W \subseteq{\mathcal{C}_p}, |W|=r$, be a solution of $\lset_p$, for some $p \in [\mu]$. Let $T \subseteq \ssset_p$ be the corresponding collection of sets in $\jset_p$ to $W$. Let $T_V$ and $T_E$ be the $V$-sets and $E$-sets of $T$ respectively. 
	Let $W_V$ and $W_E$ be the set of vectors corresponding to $T_V$ and $T_E$ respectively. We say vectors in $W_V$  and $W_E$ as $V$-vectors and $E$-vectors respectively. Note that $W= W_V \dot\cup W_E$ as $T=T_V \dot\cup T_E$. The following lemma is equivalent to Lemma~\ref{lm:setsizes}.
	\begin{lemma}\label{lm:vecsizes}
		$|W_V| = \ell$ and $|W_E| = k$.     
	\end{lemma}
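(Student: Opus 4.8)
The plan is to mirror Lemma~\ref{lm:setsizes}, but to carry out the counting on the $V$-indicator and $E$-indicator blocks rather than on $U_0$ and $U_1$, because over $\mathbb{F}_2$ the accounting has to be done at the level of parities. Write $W = W_V \,\dot\cup\, W_E$, and recall $|W| = r = \ell + k$ and $\sum_{\vec{v} \in W} \vec{v} = \vec{b} = \vec{1}$. I will prove $|W_V| \ge \ell$ and $|W_E| \ge k$ and then sandwich.

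First I would restrict attention to the $\ell$ coordinates $|\uset_p|+1,\dots,|\uset_p|+\ell$, the $V$-indicator bits. By construction the vector of a $V$-set $S_{\alpha \mapsto v_i}$ has a single $1$ in this block, at coordinate $|\uset_p|+i$, independently of $\alpha$, while every $E$-vector vanishes on this block. Since the $i$-th entry of $\vec{b}$ in this block is $1$, an odd (hence nonzero) number of vectors of $W$ have a $1$ at coordinate $|\uset_p|+i$, and any such vector must be a $V$-vector associated with $v_i$. As the families of $V$-vectors associated with distinct $v_i$ are pairwise disjoint, we get $|W_V| \ge \ell$.

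Next I would restrict to the $2k$ coordinates $|\uset_p|+\ell+1,\dots,|\uset_p|+\ell+2k$, the $E$-indicator bits, which are naturally indexed by the ordered pairs $(v_i,v_j)$ with $v_j \in N(v_i)$; there are $\sum_i d(v_i) = 2k$ of them. By construction the vector of an $E$-set $S_{(\alpha,\beta)\mapsto(v_i,v_j)}$ has exactly two $1$'s in this block, at the coordinates of $(v_i,v_j)$ and of $(v_j,v_i)$, while every $V$-vector vanishes on this block. Since $\vec{b}$ is all ones here, each of these $2k$ coordinates is hit an odd (hence positive) number of times by vectors of $W_E$; counting incidences with multiplicity gives $2|W_E| \ge 2k$, i.e.\ $|W_E| \ge k$. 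Combining, $r = |W| = |W_V| + |W_E| \ge \ell + k = r$, so both inequalities are tight, which proves the lemma.

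I do not expect a genuine obstacle here: the argument is a short double counting, and the only points needing care are the step from ``the $\mathbb{F}_2$-sum of a coordinate equals $1$'' to ``that coordinate is covered at least once'' (parity implies positivity), and the factor-$2$ bookkeeping on the $E$-indicator side. As in the \psp and \ecr proofs, once Lemma~\ref{lm:vecsizes} is available one recovers the structural claims (the analogue of Lemma~\ref{lm:ethuvertex}, and then the isomorphism as in Lemma~\ref{cl:ethiso}) by examining the restrictions of the chosen vectors to the sub-universes $U_{v_i,0}$ and $U_{v_i,v_j}$ inside $\uset_p$ and invoking that each $(\aset_{v_i,0},\bar{\aset}_{v_i,0})$ and $(\aset_{v_i,v_j},\bar{\aset}_{v_i,v_j})$ is a compatible ISS pair.
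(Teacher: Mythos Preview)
Your proof is correct and follows essentially the same approach as the paper: restrict to the $V$-indicator block to get $|W_V|\ge \ell$, restrict to the $E$-indicator block to get $|W_E|\ge k$, and sandwich against $|W|=\ell+k$. If anything, you are more explicit than the paper about why an $\mathbb{F}_2$-sum equal to $1$ at a coordinate forces at least one contributing vector (parity $\Rightarrow$ positivity), which is the only subtle point.
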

	\begin{proof}
		First consider $W_V$, and note that only $V$-vectors have $V$-indicator bits set to $1$. Since a $V$-vector has at most one $V$-indicator bit set to $1$, and there are $\ell$ $V$-indicator bits set to $1$ in $\vec{b}$, it follows that $|W_V| \ge \ell$. Now consider $W_E$, and note that only $E$-vectors have $E$-indicator bits set to $1$. Since a $E$-vector has at most two $E$-indicator bit set to $1$, and there are $2k$ $E$-indicator bits set to $1$ in $\vec{b}$, it follows that $|W_E| \ge k$. Since, $|W| = \ell + k$, we have $|W_V| =\ell$ and $|W_E|=k$.\qed
	\end{proof}
	Now we claim that $T$ is an exact $r$-cover in $\jset_p$. It is sufficient to show $T$ is an $r$-packing since $T$ covers $\uset_p$.
	To this end, note that Lemma~\ref{lm:vecsizes} implies that $|T_V|=\ell$ and $|T_E| = k$. This implies that that $T\mid_{U_2}$ and $T\mid_{U_3}$ is a packing. Hence, $T\mid_{U_2 \cup U_3}$ is a packing. Let $U'_0 := U_0 \cup U_2 \cup U_3$ and $U'_1 := U_1 \cup U_2 \cup U_3$.
	Next consider $T_V$ and note that each set in $T_V$ covers exactly $N$ elements of $U_0$. Thus, the total number of elements covered by $T_V$ is at most $\ell N$. But then, since $T_V$ covers $U_0$ and $|U_0| = \ell N$, it follows that the every set in $T_V$ must cover different elements of $U_0$. Hence, $T_V\mid_{U'_0}$ is a packing. On the other hand, $T_V$ covers at most $kN$ elements of $U_1$ as each set in $T_V$ covers exactly $\frac{N}{2} d(v_i)$ elements of $U_1$, for some $v_i \in V_H$. Since, $T$ covers $U_1$, it must be that $T_E$ must cover at least $kN$ elements of $U_1$, as $|U_1|=2kN$. Since each set in $T_E$ covers $N$ elements of $U_1$, from Lemma~\ref{lm:vecsizes} it follows that $T_E$ covers at most $kN$ elements of $U_1$. Thus, the sets in $T_E$ must cover different elements of $U_1$, and hence $T_E\mid_{U'_1}$ is a packing. However, $T_E\mid_{U'_1} = T_E$ since sets in $T_E$ only contain elements of $U'_1$. Hence, $T_E$ is a packing. This means that $T_V$ must cover at least $kN$ elements of $U_1$. Then, it follows that $T_V\mid_{U'_1} \cup T_E\mid_{U'_1} = T\mid_{U'_1}$ is a packing since we observed above that $T_V$ covers at most $kN$ elements of $U_1$. Thus, $T$ is a packing as $T\mid_{U_0} = T_V\mid_{U_0} \cup T_E\mid_{U_0} = T_V\mid_{U_0}$ is also a packing.
	
	Now since $T$ is an exact $r$-cover of $\uset_p$, we can use the No case of Theorem~\ref{thm:sgired} to recover the isomorphic subgraph of $G$ to $H$, which finishes the proof.	  
	\qed
	
	

\end{document}